\documentclass{article}
\usepackage{wright}
\usepackage{graphicx} 
\usepackage{amsmath}

\usepackage{tikz}
\usetikzlibrary{calc,decorations.pathreplacing}

\newcommand{\qpa}[1]{\Phi_{\mathsf{QPA}}^{(#1)}}
\newcommand{\rsk}{\mathrm{RSK}}
\newcommand{\bDelta}{\boldsymbol{\Delta}}
\newcommand{\bcalE}{\boldsymbol{\calE}}
\newcommand{\ucg}{U_{\mathrm{CG}}}
\newcommand{\calL}{\mathcal{L}}

\newcommand{\schur}{U_{\mathrm{Schur}}}

\title{Nonasymptotic bounds for quantum purity amplification}
\author{Thilo Scharnhorst\thanks{UC Berkeley. \texttt{\{thilo,jspilecki,jswright\}@berkeley.edu}}  \and Jack Spilecki\footnotemark[1] \and John Wright\footnotemark[1]}
\date{}

\begin{document}

\maketitle

\begin{abstract}
    In \emph{quantum purity amplification},
    one is given $n$ copies of a noisy quantum state $\rho \in \C^{d \times d}$
    and asked to prepare $k$ copies of its principal eigenstate $\ket{v_d}$.
    Several prior works~\cite{CEM99,Fu16,LFIC24} have derived information-theoretically optimal algorithms for this problem,
    but the bounds they prove are only shown in the asymptotic regime as the number of samples $n$ tends to infinity.
    In this paper, we establish the following nonasymptotic guarantee: if $\rho$'s eigenvalues are sorted $p_1 \leq \cdots \leq p_d$ and $p_{d-1} < p_d$, then
    \begin{equation*}
        n = O\Big(k + \frac{k}{\delta} \cdot \frac{1-p_d}{(p_d-p_{d-1})^2}\Big)
    \end{equation*}
    copies suffice to output a state with fidelity at least $1-\delta$ with $\ket*{v_d^{\otimes k}}$. Our bound holds for arbitrary spectra, and is independent of the dimension $d$. In the case of depolarizing noise, our finite-sample guarantee matches the optimal asymptotic scaling. Our proof is based on the combinatorics of random Young diagrams.
\end{abstract}

\newpage



\section{Introduction}

One of the most difficult challenges in the design and implementation of quantum computers is the presence of noise in quantum systems.
Many quantum algorithms and protocols rely on the creation of delicately prepared pure quantum states, and imperfections in the quantum computer, as well as interactions with the environment, can perturb or even destroy these states.
Several approaches for managing this error have been suggested in the literature, such as quantum fault tolerance and quantum error mitigation.
We study a third approach known as \emph{quantum purity amplification}.\footnote{This task is also commonly referred to in the literature as \emph{quantum state purification}. We use the term \emph{quantum purity amplification} to avoid confusion with the notion of a purification of a mixed state.}
To define this approach, let us first fix some notation:
throughout this work, $\rho \in \C^{d \times d}$ will denote a $d$-dimensional mixed state with eigendecomposition
\begin{equation*}
    \rho = \sum_{i=1}^d p_i \cdot \ketbra{v_i},
\end{equation*}
with the eigenvalues sorted as $p_1 \leq \cdots \leq p_d$.
We will always assume $\rho$ has a unique principal eigenstate, i.e.\ that $p_{d-1} < p_d$.
Intuitively, we will imagine that $\ket{v_d}$ is a pure state we want access to, but it has been sent through some sort of noisy channel, resulting in the mixed state $\rho$.
In the problem of quantum purity amplification, we are given access to $n$ copies of $\rho$, and the goal is to produce a quantum state whose fidelity with $\ket{v_d}$ is high.
We will also consider the variant in which we want to recover $k$ copies of $\ket{v_d}$, in which case we want a state with high fidelity to $\ket{v_d^{\otimes k}}$.

Quantum purity amplification was first studied in a work of Cirac, Ekert, and Macchiavello~\cite{CEM99},
although they were motivated by related problems in the topic of entanglement purification~\cite{BBP+96}.
Their main result was to identify and analyze the optimal algorithm for purifying qubit states.
They showed that their algorithm, when given $n$ copies of a mixed qubit state $\rho = p_1 \cdot \ketbra{v_1} + p_2 \cdot \ketbra{v_2}$ with $p_2 > p_1$, outputs a mixed state $\sigma$ with fidelity
\begin{equation*}
    \bra{v_2} \cdot \sigma \cdot \ket{v_2} = 1 - \frac{1}{n} \cdot \frac{p_1}{(p_2 - p_1)^2} + O_p\Big(\frac{1}{n^2}\Big).
\end{equation*}
Note that the final error term $O_p(1/n^2)$ is allowed to depend arbitrarily on the probability vector $p = (p_1, p_2)$.
Thus, such a bound cannot give any concrete guarantees for the output fidelity of the algorithm when the number of copies $n$ is a reasonable, finite number (such as $n = 1,000,000$); instead, it can only provide guarantees in the \emph{asymptotic regime} when $n \rightarrow \infty$.
In this regime their bound says that to achieve output fidelity $\bra{v_2} \sigma \ket{v_2} \geq 1 - \delta$ in the limit as $\delta \rightarrow 0$,
\begin{equation*}
    n = O\bigg(\frac{1}{\delta} \cdot \frac{1-p_2}{(p_2 - p_1)^2}\bigg)
\end{equation*} 
copies of $\rho$ suffice.
Hence, the number of samples blows up when $p_1$ approaches $p_2$,
as in this regime it is difficult to distinguish $\ket{v_2}$ from the surrounding noise,
but $n$ becomes reasonable once $p_1$ and $p_2$ are gapped.

Since this work, it has been an open problem to devise the optimal quantum purity amplification algorithm for qudit rather than qubit states.
The chief difficulty is that the algorithm of~\cite{CEM99} is heavily representation theoretic, and the representation theory of quantum states, already challenging in the qubit setting, becomes far more difficult in the qudit setting.
A first step in this direction was made by Fu in his master's thesis~\cite{Fu16}, where he identified and analyzed the optimal algorithm for purifying qutrit states in the special case of depolarizing noise.
This is the physically-motivated case when $p_1 = \cdots = p_{d-1} < p_d$, and the goal is to output $\ket{v_d}$.
In this case, we can view the state $\rho$ as being the result of passing $\ket{v_d}$ through the depolarizing channel
\begin{equation*}
    \Phi_{\eta}(\ketbra{v_d}) = (1- \eta) \cdot \ketbra{v_d} + \eta \cdot (I/d),
\end{equation*}
for some noise rate $\eta$.
Assuming this symmetry among the smaller eigenvalues makes the representation theory more tractable, thereby making it possible to identify the optimal channel.
As in~\cite{CEM99}, Fu is able to analyze the output fidelity of this optimal qutrit channel, but only in the case of asymptotically large $n$.

Recently, the work of Li, Fu, Isogawa, Silva, and Chuang~\cite{LFI+25} has identified the optimal algorithm for purifying qudit states in the case of depolarizing noise.
In the asymptotic limit as $\delta \rightarrow 0$, the output of their algorithm achieves fidelity $\bra{v_d} \sigma \ket{v_d} \geq 1 - \delta$ once
\begin{equation}\label{eq:qpa-bound-depolarizing}
    n = O\bigg(\frac{1}{\delta} \cdot \frac{1-p_d}{(p_d - p_{d-1})^2}\bigg),
\end{equation}
which generalizes the qubit and qutrit results of~\cite{CEM99} and~\cite{Fu16}, respectively.
They also give an algorithm for purifying general, not necessarily depolarized, qudit states.
Although they do not show that this algorithm is optimal for this task, they are still able to analyze it and show that in the asymptotic regime of $\delta \rightarrow 0$, it achieves fidelity $1 - \delta$ once
\begin{equation}\label{eq:full-spectrum}
    n = O\bigg(\frac{1}{\delta} \cdot \Big(\frac{p_1}{(p_d - p_1)^2} + \cdots + \frac{p_{d-1}}{(p_d - p_{d-1})^2}\Big)\bigg).
\end{equation}
Note that $(p_d-p_i)^{-2} \leq (p_d - p_{d-1})^{-2}$ for all $1 \leq i \leq d-1$ and that $p_1 + \cdots + p_{d-1} = 1 - p_d$.
Applying these two facts, we have that
\begin{equation*}
    \eqref{eq:full-spectrum} \leq O\bigg(\frac{1}{\delta} \cdot \frac{1-p_d}{(p_d - p_{d-1})^2}\bigg).
\end{equation*}
As a result, \eqref{eq:full-spectrum} generalizes their bound from~\eqref{eq:qpa-bound-depolarizing}, and even improves upon it when $p$ has a lot of weight spread out among eigenvalues which are significantly smaller than $p_{d-1}$.

In practice, however, we do not have access to an asymptotic number of copies of $\rho$, but only a concrete number $n$ of them.
Motivated by this, we study the problem of proving \emph{nonasymptotic} bounds for quantum purity amplification, which hold for any error parameter $\delta > 0$ and not just infinitesimally small $\delta$.
Our main result gives a nonasymptotic bound which matches the asymptotic bound in \eqref{eq:qpa-bound-depolarizing} for the depolarizing channel, generalizes it to arbitrary spectra, and extends it to the case of $k$-copy outputs.

\begin{theorem}[Nonasymptotic bounds for quantum purity amplification]\label{thm:nonasymptotic-bounds}
    There exists an absolute constant $C$ such that the following is true.
    Let $\delta > 0$ and $k \geq 1$ be an integer.
    Then for any mixed state $\rho \in \C^{d \times d}$,
    we have that
    \begin{equation*}
        \bra{v_d^{\otimes k}} \qpa{k}(\rho^{\otimes n}) \ket{v_d^{\otimes k}} \geq 1 - \delta,
        \quad
    \text{so long as}\quad
        n \geq C \cdot k + C \cdot \frac{k}{\delta} \cdot \frac{1 - p_d}{(p_d-p_{d-1})^2}.
    \end{equation*}
\end{theorem}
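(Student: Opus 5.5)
We will work with the standard weak Schur sampling description of $\qpa{k}$. Schur--Weyl duality decomposes $(\C^d)^{\otimes n} \cong \bigoplus_{\lambda} \mathrm{Sp}_\lambda \otimes V^d_\lambda$, and $\rho^{\otimes n}$ is block diagonal with blocks $I \otimes \nu_\lambda(\rho)$. The channel measures $\boldsymbol{\lambda} \sim \mathrm{SW}^n(p)$, discards the permutation register, and applies a fixed covariant map $V^d_{\boldsymbol\lambda} \to \mathrm{Sym}^k(\C^d)$. This map is obtained by projecting onto the highest-weight component $V^d_{(\boldsymbol\lambda_1+k)}$ inside $V_{\boldsymbol\lambda}\otimes \mathrm{Sym}^k$, or equivalently by the Clebsch--Gordan route that the $\ucg$ notation suggests. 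By unitary covariance we may assume $\rho$ is diagonal with $\ket{v_d}=\ket{d}$. The fidelity then becomes $\E_{\boldsymbol\lambda}[F(\boldsymbol\lambda)]$, where $F(\lambda)$ is an explicit rational function of $\lambda$ and $p$, computed using the Gelfand--Tsetlin basis. The first step is to derive such a formula, or at least a lower bound. I expect something like
\[
1 - F(\lambda) \le k\cdot \frac{\E[\text{boxes outside the first row}]}{\lambda_1 - \lambda_2 + 1} + (\text{weight on non-dominant GT patterns}),
\]
a $k$-fold version of the qubit CEM formula. The main term should behave like $k(n-\lambda_1)/\lambda_1$ corrected by the gap.

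The second step controls the spectrum of the Young diagram. We will use the known nonasymptotic facts about RSK and longest increasing subsequences: $\E[n-\boldsymbol\lambda_1]\le n(1-p_d)$, and more sharply $\E[\boldsymbol\lambda_1] \ge np_d - O\big(\sum_{i<d} p_i/(p_d-p_i)\big)$. We will also use tail bounds showing that $\boldsymbol\lambda_1 - \boldsymbol\lambda_2 \gtrsim n(p_d-p_{d-1})$ except with small probability. Concretely, the second row is dominated by a word-counting argument, giving $\boldsymbol\lambda_2 \le \boldsymbol\lambda_1$-complement bounds of the form $\Pr[\boldsymbol\lambda_1 - \boldsymbol\lambda_2 < n\Delta/2]\le \exp(-\Omega(n\Delta^2/(1-p_d)))$, where $\Delta = p_d - p_{d-1}$.

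The third step combines the two. The fidelity loss should look like $k\cdot\E\big[(\text{weight of the state off the top GT vector})\big]$. Conditionally on $\lambda$, that weight is $\approx \sum_{i<d} p_i\,\mu_i/((p_d-p_i)\lambda_1)$-type expressions. After taking expectations, this gives $O\big(\frac{k}{n}\cdot\frac{1-p_d}{\Delta^2}\big)$. A loss of that size is at most $\delta$ once $n \ge Ck(1-p_d)/(\delta\Delta^2)$. The additive $Ck$ term handles the regime where $\lambda_1$ must exceed $k$ for the output map to be well behaved. The bad events, such as a small gap or $\lambda_1 < n/2$, are bounded by the tail estimates. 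Their contribution is absorbed because $\exp(-\Omega(n\Delta^2/(1-p_d))) \le O((1-p_d)/(n\Delta^2))$.

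I expect the main obstacle to be the first step: obtaining a clean, dimension-free lower bound on $F(\lambda)$ for general $d$ and $k$. Exact qudit Clebsch--Gordan coefficients are intractable, so I would avoid computing them. Instead, I would lower bound the fidelity by the probability that the $\nu_\lambda(\rho)$-weighted GT pattern places all of $\lambda_1$ in the top row. This reduces the problem to the qubit case applied to the $d$ versus the rest coarse-graining, which is where the $(1-p_d)$ dependence, rather than the sum over all $p_i$, comes from.
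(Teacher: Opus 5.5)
Your Step 1 is the heart of the theorem, and the route you sketch for it cannot give a $1/n$ rate. The dominant Gelfand--Tsetlin pattern is the one with all overhangs $b_i=\mu_i-\lambda_{i+1}$ equal to zero, where $\mu$ is the shape of $T^{<d}$. Since $\sum_i b_i=\lambda_1-h_d$, with $h_d$ the number of $d$'s, its weight averaged over $\lambda$ is $\Pr[\lambda_1=h_d]$. But whenever $w_1\neq d$, the subsequence formed by $w_1$ followed by all the $d$'s shows $\lambda_1\ge h_d+1$. So $\Pr[\lambda_1=h_d]\le p_d$ for every $n$, already for $d=2$. Counting non-dominant patterns as lost fidelity therefore leaves an error floor of $1-p_d$. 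Your displayed target has the same defect. Read literally, $k(n-\lambda_1)/(\lambda_1-\lambda_2+1)$ is typically of order $k(1-p_d)/(p_d-p_{d-1})$, a constant, and the non-dominant-weight term is again at least $1-p_d$ on average.

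The missing idea is that a non-dominant tableau loses only fidelity proportional to its overhang. You do not need general qudit Clebsch--Gordan coefficients for this. You need only the family in Lemma~\ref{lem:cg-coeff}: pair $\ket{d^*}$ with deleting the last box of row~1. That family has a closed product form. Iterating it $k$ times and combining with the Weyl dimension formula gives $F(\lambda,T)=\prod_{i\ge 2}(\Delta_i-b_{i-1})^{\downarrow k}/\Delta_i^{\downarrow k}\ge 1-\frac{k}{\lambda_1-\lambda_2-k+1}\sum_i b_i$. The identity $\sum_i b_i=\lambda_1-h_d$ then turns the numerator into $\E[\lambda_1]-p_d n$. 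What is needed is the \emph{upper} bound $\E[\lambda_1]\le p_d n+\sum_{i<d}p_i/(p_d-p_i)\le p_d n+(1-p_d)/(p_d-p_{d-1})$ of Lemma~\ref{lem:first-row}. You quote this fact in the reverse direction, which is weaker than the trivial $\E[\lambda_1]\ge \E[h_d]=p_d n$ and useless here. This bound is where the factor $1-p_d$ comes from, not a coarse-graining to a qubit with spectrum $(1-p_d,p_d)$, whose gap $2p_d-1$ can be negative.

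Two further problems. First, your first description of the channel is not equivalent to $\qpa{k}$. Adjoining $\mathrm{Sym}^k$, projecting onto the $\lambda+ke_1$ component of $V_\lambda\otimes\mathrm{Sym}^k$ and tracing out $V_\lambda$ is a Werner-cloning-type map. On pure input, $\lambda=(n)$, its fidelity is $\binom{n+d-1}{d-1}/\binom{n+k+d-1}{d-1}$, which tends to $0$ as $d\to\infty$, so no dimension-free bound can hold for it. The paper's channel instead has Choi matrix proportional to the projector onto $Q^d_{\lambda-ke_1}$ inside $((Q^d_{(1)})^*)^{\otimes k}\otimes Q^d_\lambda$: boxes are removed, not added. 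It gives fidelity exactly $1$ on that input.

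Second, your tail bound $\exp(-\Omega(n\Delta^2/(1-p_d)))$ is false when $1-p_d\ll\Delta$. For $d=2$ and $p=(\epsilon,1-\epsilon)$, the single word $2^{n/2}1^{n/2}$ has $\lambda_1=\lambda_2$ and probability $(\epsilon(1-\epsilon))^{n/2}$, far larger than $e^{-\Omega(n/\epsilon)}$. Only a bound of order $(1-p_d)/(n\Delta^2)$ is actually needed. The paper gets it from Chebyshev on $h_d\le\lambda_1$ together with the second-moment bound on $\lambda_2$ in Lemma~\ref{lem:second-row}. Your unspecified word-counting step for the second row would have to be replaced by that nontrivial input.
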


Let us note that the first $O(k)$ term in our sample complexity is needed simply to ensure that at least $O(k)$ copies of $\rho$ are given as input, as this is surely required if we hope to produce $k$ copies of $\ket{v_d}$.
(Note that the second term can actually be smaller than $k$ if $p_d$ is very close to 1.)
In most cases, however, the second term will be the one which dominates, and we can see that it matches the bound in \eqref{eq:qpa-bound-depolarizing}.
Our techniques are based on the combinatorics of Young diagrams and longest increasing subsequences.
These techniques have been previously used to analyze quantum state learning and testing algorithms in works such as~\cite{OW15,OW16,OW17a}, and we show that they can be used to analyze quantum purity amplification algorithms as well.
Our bound leaves open the question of whether a fine-grained version of the bound in~\eqref{eq:full-spectrum} can be shown in the nonasymptotic setting.

\subsection{Related work}

Quantum purity amplification has been studied in a variety of settings,  yielding a suite of algorithms with different tradeoffs.
The representation theoretic algorithms we have discussed so far, and which we study in this work, are able to achieve the information-theoretic optimal sample complexities.
However, they tend to be computationally more challenging to implement, and they require having access to all $n$ copies of $\rho$ up front.
An elegant alternative approach based on a tree of $\mathrm{SWAP}$ tests was proposed by Childs, Fu, Leung, Li, Ozols, and Vyas~\cite{CFL+25}.
Their algorithm operates in the \emph{streaming} model, in which the copies of $\rho$ come in one-at-a-time, and only requires $O(\log(n))$ qudits of memory.
They consider the case of depolarizing noise, and show a nonasymptotic bound which achieves the optimal bound from~\eqref{eq:qpa-bound-depolarizing} up to constant factors, at least in the case when $p_d \geq 2/3 + 1/(3d)$;
when $p_d$ is smaller, they show a bound of $n = O_p(1/\delta)$.
Following their work, Grier, Leung, Li, Pashayan, and Schaeffer~\cite{GLL+25} extended the $n = O_p(1/\delta)$ sample bound to hold against all states $\rho$, not just in the case of depolarizing noise.

Another approach for quantum purity amplification comes from the work of Dalzell,  Gilyén,  Hann,  McArdle, Salton,  Nguyen,  Kubica,  and Brandão~\cite{DGH+25}.
Interestingly, their motivation came from designing a protocol for implementing a fault tolerant quantum random access memory.
For their application, they also wanted an algorithm which works in the streaming model, but they also wanted strong bounds in the regime where $p_d \ll 1$. 
Their main result pertaining to quantum purity amplification is a streaming algorithm which uses
\begin{equation*}
    n = O\bigg(\Big(\frac{1}{\delta} + \frac{1}{p_d}\Big) \cdot \frac{1-p_d}{(p_d - p_{d-1})^2}\bigg),
\end{equation*}
copies of $\rho$, nearly matching the optimal bound in the case of depolarizing noise, all while maintaining only $O(1)$ qudits of memory.
Their algorithm is based on the quantum singular value transform framework.

\paragraph{Concurrent and independent work.}
We recently learned of the independent work of Li, Theil, Harrow, and Chuang~\cite{LTHC26a}.
They identified the optimal algorithm for purifying arbitrary qudit states in the case of general spectra,
and generalized it further to the problem of extracting the eigenstate corresponding to the $j$-th largest eigenvalue of $\rho$ rather than just the first.
Finally, they are also able to derive nonasymptotic bounds on the sample complexities of their algorithm.
As observed in their follow-up work~\cite[Equation (4)]{LTHC26b}, 
their bounds are strong enough to imply a nonasymptotic sample complexity of
\begin{equation}\label{eq:lthc-bound}
    n = O\bigg(\frac{k}{\delta}\cdot \frac{1}{(p_d-p_{d-1})^2}\bigg),
\end{equation}
which essentially matches our bound for general spectra.
In addition, they are also able to derive more fine-grained fidelity bounds which are similar in spirit to the bound in equation \eqref{eq:full-spectrum} (see, e.g.\ their \cite[Lemma S6.36]{LTHC26a}).

\section*{Acknowledgments}

We thank András Gilyén and Zhaoyi Li for helpful discussions, and Angelos Pelecanos for collaboration during the early stages of this project.
J.S.\ and J.W.\ are supported by the NSF CAREER award CCF-233971.

\section{Preliminaries}

We use \textbf{boldface} to denote random variables.
We write $[n]$ for the set of integers $\{1, 2, \dots, n\}$. 
The \emph{falling factorial} is denoted as $n^{\downarrow k} \coloneqq n (n-1) \cdots (n - k+1)$. 
Given a pure state $\ket{\psi}$, we denote its $k$-fold tensor power either by $\ket{\psi}^{\otimes k}$ or by $\ket*{\psi^{\otimes k}}$.
A tuple of numbers $a = (a_1, \ldots, a_n)$ is \emph{weakly increasing} if $a_1 \leq \cdots \leq a_n$ and \emph{strongly increasing} if $a_1 < \cdots < a_n$.
A \emph{subsequence of $a$} is a string $(a_{i_1}, \ldots, a_{i_k})$, where $1 \leq i_1 < \cdots < i_k \leq n$.

\subsection{Young diagrams and Young tableaux}
In the next few sections, we will introduce the combinatorics of random words and its relationship to the representation theory which arises when studying quantum states.
For a more thorough introduction to these topics, see~\cite{Wri16}.

A \emph{partition of $n$}, denoted $\lambda \vdash n$, is a tuple of integers $\lambda = (\lambda_1, \ldots, \lambda_d)$ satisfying $\lambda_1 + \cdots + \lambda_d = n$ and $\lambda_1 \geq \cdots \geq \lambda_d \geq 0$.
We write $\ell(\lambda)$ for the \emph{length} of $\lambda$, which is the number of its nonzero entries.
Partitions $\lambda$ are often visually represented as \emph{Young diagrams}, which consist of boxes arranged into rows of lengths $\lambda_1$, $\lambda_2$, and so forth.
A \emph{standard Young tableau (SYT)} $S$ is a Young diagram whose boxes have been filled with the numbers $1, \ldots, n$, with the property that each row is strongly increasing when read from left-to-right and each column is strongly increasing when read from top-to-bottom.
A \emph{semistandard Young tableau (SSYT) $T$ with alphabet $[d]$} is a Young diagram whose boxes have been filled with numbers in $[d]$, with the property that each row is weakly increasing when read from left-to-right and each column is strongly increasing when read from top-to-bottom.
We illustrate these in Figure~\ref{fig:example_Young_tableaux}. 

\begin{figure}[h!]
\centering

\begin{minipage}[b]{0.28\textwidth}
\centering
\ydiagram{4,3,1}\\[0.6em]
{\small Young diagram of $\lambda$}
\end{minipage}
\hfill
\begin{minipage}[b]{0.28\textwidth}
\centering
\begin{ytableau}
1 & 2 & 6 & 8 \\
3 & 5 & 7 \\
4
\end{ytableau}\\[0.6em]
{\small An SYT $S$}
\end{minipage}
\hfill
\begin{minipage}[b]{0.28\textwidth}
\centering
\begin{ytableau}
1 & 1 & 2 & 3 \\
2 & 2 & 3 \\
3
\end{ytableau}\\[0.6em]
{\small An SSYT $T$}
\end{minipage}

\caption{From left to right: a Young diagram of shape $\lambda=(4,3,1)$, a standard Young tableau (SYT) $S$ of shape $\lambda$, and a semistandard Young tableau (SSYT) $T$ of shape $\lambda$ with alphabet $[3]$.}
\label{fig:example_Young_tableaux}
\end{figure}

The shape $\shape(S)$ of an SYT $S$ is the Young diagram that results from erasing all the entries of $S$; the shape $\shape(T)$ of an SSYT $T$ is defined similarly.
Given $1 \leq i \leq d$,
we will write $T^{< i}$ for the SSYT produced by removing any box containing a letter which is equal to $i$ or larger.
The \emph{type} of $T$ is specified by a histogram $h = (h_1, \ldots, h_d)$, in which $h_i$ equals the number of occurrences of the letter $i$ in $T$.

A \emph{word} is a tuple $w = (w_1, \ldots, w_n) \in [d]^n$.
The \emph{type} of $w$ is specified by a histogram $h = (h_1, \ldots, h_d)$, in which $h_i$ equals the number of occurrences of the letter $i$ in $w$.
The well-known \emph{Robinson--Schensted--Knuth (RSK) correspondence} establishes a bijection between words $w \in [d]^n$ and pairs $(S, T)$ with the following properties: (i) $S$ is an SYT, (ii) $T$ is an SSYT, (iii) $S$ and $T$ have the same shape $\lambda$, and (iv) $T$ has the same type as $w$.
The algorithm which computes this bijection is known as the \emph{RSK algorithm}, and we write $\rsk(w) = (\lambda, S, T)$ for the output of this bijection; although $\lambda$ is implicit in $S$ and $T$, we include it for convenience.
An observation first made by Schensted~\cite{Sch61}, and later generalized by Greene~\cite{Gre74},
is that $\lambda_1$ is equal to the length of the longest weakly increasing subsequence in $w$.
A consequence is that $\lambda_1 \geq h_d$, as the all-$d$'s subsequence is a weakly increasing subsequence in $w$.

Given a probability distribution $p = (p_1, \ldots, p_d)$, we write $\bw \sim p^{\otimes n}$ for a word $\bw = (\bw_1, \ldots, \bw_n) \in [d]^n$ in which each letter $\bw_i$ is sampled independently from $p$.
A sample from the \emph{RSK distribution} $\rsk^n(p)$ is distributed as $(\blambda, \bS, \bT) = \mathrm{RSK}(\bw)$, where $\bw \sim p^{\otimes n}$.
Note that if $w \in[d]^n$ is a fixed word and $(\lambda, S, T) = \rsk(w)$, then because the RSK algorithm computes a bijection, we have that
\begin{equation*}
    \Pr[(\blambda, \bS, \bT) = (\lambda, S, T)]
    = \Pr[\bw = w] = p_1^{h_1} \cdots p_d^{h_d} \eqqcolon p^h,
\end{equation*}
where $h$ is the type of $w$.
Thus, since $T$ and $w$ have the same type, if we set $p^T \coloneqq p^h$, we have that
\begin{equation}\label{eq:rsk-pdf}
\Pr[(\blambda, \bS, \bT) = (\lambda, S, T)] = p^T.
\end{equation}

The distribution of the Young diagram $\blambda$ has been extensively studied in both the mathematics literature~\cite{KV86,TW01,Joh01,ITW01,Bia01,Mel10a,Mel10b,HX13,Mel12} and the quantum computing literature~\cite{ARS88,KW01,Kup02,HM02,CM06,CHW07,OW15,OW16,OW17a}.
As we have seen, the length of the first row $\blambda_1$ is at least $\bh_d$, where $\bh$ is the histogram of $\bw$, and so $\E[\blambda_1] \geq \E[\bh_d] = p_d \cdot n$.
Of course, the longest increasing subsequence of $\bw$ might also use letters which are not $d$ to attain a length which is longer than $\bh_d$.
However, because $p_d$ is the largest probability value, these additional letters turn out to not help too much, and for large $n$, it turns out that $\blambda_1$ tends to concentrate around $p_d \cdot n$.
Most strikingly, it turns out that if there is a gap between $p_d$ and the other probability values, then the non-$d$ letters can only increase the expected length of the longest increasing subsequence by at most an absolute constant. 
\begin{lemma}[$k=1$ case of {\cite[Theorem 1.13]{OW17a}}]\label{lem:first-row}
For all $n$,
\begin{equation*}
    \E[\blambda_1] \leq p_d \cdot n + \sum_{i=1}^{d-1} \frac{p_i}{p_d - p_i}.
\end{equation*}
\end{lemma}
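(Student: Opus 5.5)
The plan is to reduce the bound on $\E[\blambda_1]$ to a statement about the expected excess of the longest weakly increasing subsequence over the count $\bh_d$ of $d$'s. By Schensted's theorem, $\blambda_1 = \mathrm{LIS}(\bw)$, where LIS denotes the longest weakly increasing subsequence. Any weakly increasing subsequence has the form: a weakly increasing subsequence of non-$d$ letters occupying a prefix $\bw_1,\ldots,\bw_j$, followed by all the $d$'s in $\bw_{j+1},\ldots,\bw_n$. Hence
\begin{equation*}
\blambda_1 = \max_{0\le j\le n}\Big(\mathrm{LIS}_{<d}(\bw_{\le j}) + \#\{i>j : \bw_i = d\}\Big),
\end{equation*}
and therefore
\begin{equation*}
\blambda_1 - \bh_d = \max_{j}\Big(\mathrm{LIS}_{<d}(\bw_{\le j}) - \#\{i\le j: \bw_i=d\}\Big).
\end{equation*}
Here $\mathrm{LIS}_{<d}$ is the longest weakly increasing subsequence that avoids the letter $d$.

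I would then proceed by induction on $d$. Define $\bX_j$ as the LIS using only letters $<d$ in the prefix, minus the number of $d$'s in the prefix. The recursive structure peels off one letter at a time. Write the optimal subsequence as blocks of $1$'s, then $2$'s, up to $d$'s, with breakpoints $0\le j_1\le\cdots\le j_{d-1}\le n$. Then
\begin{equation*}
\blambda_1 - \bh_d = \max_{j_1\le\cdots\le j_{d-1}} \sum_{i=1}^{d-1}\big(\bN_i(j_{i-1},j_i] - \bN_d(j_{i-1},j_i]\big),
\end{equation*}
where $\bN_a(I)$ counts the occurrences of $a$ in the interval $I$. This suggests a cleaner bound. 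Relax the constraint that the whole prefix be shared, and bound the maximum by a sum over $i$ of separate suprema of the walks $\bS^{(i)}_t = \bN_i[1,t]-\bN_d[1,t]$ taken over increments. This direct decoupling is not quite valid, since the increments are taken over disjoint consecutive intervals. I would instead aim for
\begin{equation*}
\max_{j_1\le\cdots}\sum_i \big(\bS^{(i)}_{j_i}-\bS^{(i)}_{j_{i-1}}\big) \le \sum_i \max_{s\le t}\big(\bS^{(i)}_t-\bS^{(i)}_s\big).
\end{equation*}
This loses, because the maximal upward excursion of each walk is bounded by the range rather than by a one-sided supremum. A better route is a coupling or "Markov chain in $j$" argument that gives the exact additive form.

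Concretely, I would consider the process $\bY_j = \max_{j'\le j}(\ldots)$, i.e.\ the vector of "current best value ending with letter $\le i$" for $i=1,\ldots,d$, updated letter by letter. This is the RSK first-row dynamics restricted to tracking $\lambda_1$ via the first-row entries. Define $\bZ^{(i)}_j$ as the best value of a subsequence ending in a letter $\le i$ in the prefix, minus $\bZ^{(i-1)}_j$. Each $\bZ^{(i)}$ is a nonnegative process that increments by $1$ when letter $i$ arrives. It decrements, reflected at $0$, when a letter $>i$ arrives, in a way that dominates a reflected walk with up-probability $p_i$ against down-probability at least $p_d$. A reflected random walk started at $0$ with up-probability $p_i$ and down-probability $p_d$, with other steps lazy, is stochastically dominated at all times by its stationary law. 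That law is geometric, with mean $\frac{p_i/(p_i+p_d)}{1-p_i/p_d}\cdot\frac{p_d}{p_i+p_d}$-type expressions, which simplify to $\frac{p_i}{p_d-p_i}$. Summing the telescoping decomposition $\blambda_1 = \bh_d + \sum_{i<d}(\text{excess contributions})$ and taking expectations would give the lemma.

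The main obstacle is making the comparison rigorous. The decrement of $\bZ^{(i)}$ is triggered by arrivals of a letter $>i$ only in a coupled, state-dependent way, and $d$'s are not the only letters that matter. I would first try to show that the right accounting is $\blambda_1 - \bh_d \le \sum_{i<d} \bR^{(i)}_n$, where $\bR^{(i)}$ is the walk reflected at $0$ with $+1$ on letter $i$, $-1$ on letter $d$, and $0$ otherwise. I would prove this pathwise by induction on $d$, by merging letters. Then $\E[\bR^{(i)}_n]\le \frac{p_i}{p_d-p_i}$ follows from monotone convergence to stationarity, via the standard Lindley-recursion fact.
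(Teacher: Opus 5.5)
The paper gives no argument of its own for this lemma; it imports it from \cite{OW17a}. Your proposal therefore has to stand alone, and it has a genuine gap exactly at the step you flag as the main obstacle. The pathwise inequality $\blambda_1-\bh_d\le\sum_{i<d}R^{(i)}_n$ is false, where $R^{(i)}$ is the forward Lindley process ($+1$ on letter $i$, $-1$ on letter $d$, reflected at $0$). Already for $d=2$ and $w=(1,2)$ you have $\blambda_1-\bh_2=2-1=1$, while $R^{(1)}$ goes $0\to1\to0$. For $d=2$ the correct identity is $\blambda_1-\bh_2=\max_j S^{(1)}_j$. Here $S^{(i)}_j=N_i[1,j]-N_d[1,j]$, and $N_a[1,j]$ counts occurrences of $a$ among $w_1,\dots,w_j$. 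This running maximum is the Lindley process of the \emph{reversed} word. Time reversal does not rescue the decoupling for $d\ge3$, however. For $w=(3,3,1,1,2,2)$ one has $\blambda_1-\bh_3=4-2=2$, while $\max_jS^{(1)}_j=\max_jS^{(2)}_j=0$. The obstruction is the one you half-identified. For $i\ge2$ the $i$-th block contributes an increment of $S^{(i)}$ starting at the data-dependent time $j_{i-1}$, so no one-dimensional walk driven only by the letters $i$ and $d$ controls it. Your description of the $Z^{(i)}$ process is also incorrect. Let $L^{(i)}_j$ be the longest weakly increasing subsequence of $w_1\cdots w_j$ using letters $\le i$. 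Then $L^{(i)}_j-L^{(i-1)}_j$ is the number of $i$'s in the first row of the RSK insertion tableau. It is decremented only by bumping from \emph{smaller} letters, never by a $d$. Moreover, $\sum_{i<d}Z^{(i)}_n$ is the longest weakly increasing subsequence of the non-$d$ subword, which grows linearly in $n$ and is not $\blambda_1-\bh_d$.

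What does work is to keep the walks coupled. Read the word backwards. For $1\le m\le d-1$, let $G^{(m)}_t$ be the maximum of $\sum_{i=m}^{d-1}(S^{(i)}_{j_i}-S^{(i)}_{j_{i-1}})$ over $t=j_{m-1}\le j_m\le\cdots\le j_{d-1}\le n$, and set $G^{(d)}\equiv0$. Then $G^{(m)}_{t-1}=\max\bigl(G^{(m+1)}_{t-1},\,G^{(m)}_t+\bone[w_t=m]-\bone[w_t=d]\bigr)$. The gaps $D^{(m)}=G^{(m)}-G^{(m+1)}\ge0$ satisfy $\blambda_1-\bh_d=\sum_{m<d}D^{(m)}_0$. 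One checks that $D$ is a Markov chain started at $0$ with the following moves. A letter $i<d$ adds one to $D^{(i)}$ and removes one from the highest-indexed nonempty $D^{(m)}$ with $m<i$. A letter $d$ removes one from the highest-indexed nonempty coordinate. The fact you actually need, which no one-dimensional comparison supplies, is a Burke-type product-form statement: this multi-class queue has stationary law $\pi(D)\propto\prod_{m<d}(p_m/p_d)^{D^{(m)}}$, which can be verified directly from global balance. Since $\blambda_1-\bh_d$ is pathwise nondecreasing when letters are appended, it increases almost surely to a limit distributed as $\sum_m D^{(m)}$ under $\pi$. This gives $\E[\blambda_1-\bh_d]\le\sum_{m<d}\frac{p_m/p_d}{1-p_m/p_d}=\sum_{m<d}\frac{p_m}{p_d-p_m}$. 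Together with $\E[\bh_d]=p_d n$, this is the lemma. Without the coupled chain and its product-form stationary law, your outline does not establish the bound.
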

\noindent 
This bound turns out to be tight, and indeed the two sides are actually equal to each other in the limit of large $n$~\cite{ITW01}.
Like the first row, the length of the second row $\blambda_2$ tends to concentrate around $p_{d-1} \cdot n$, the length of the third row $\blambda_3$ tends to concentrate around $p_{d-2} \cdot n$, and so forth. We will need the following result along these lines.
\begin{lemma}[$k = 2$ case of {\cite[Theorem 4.12]{OW17a}}]\label{lem:second-row}
$
    \E[
    (\blambda_2-p_{d-1} \cdot n)^2]
    \leq
    84 p_{d-1} \cdot n+42 (1 - p_{d}) \cdot n.
$
\end{lemma}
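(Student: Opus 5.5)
The plan is to split the second moment into a variance term and a bias term,
\[
\E[(\blambda_2-p_{d-1}n)^2]=\mathrm{Var}[\blambda_2]+\big(\E[\blambda_2]-p_{d-1}n\big)^2 ,
\]
and to show that each is $O(p_{d-1}n+(1-p_d)n)$. Tracking the constants loosely should give the stated $84$ and $42$, but I will not aim for those exact values. The main tools are Greene's theorem (for any word, $\lambda_1+\lambda_2$ is the maximum total length of two disjoint weakly increasing subsequences) and the Efron--Stein inequality, applied to $\blambda_2$ as a function of the independent letters $\bw_1,\dots,\bw_n$.

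\emph{Variance.} Replacing one letter of $w$ changes $\lambda_1$ by at most $1$, and by Greene's theorem it changes $\lambda_1+\lambda_2$ by at most $2$ (remove the affected position from the optimal pair of subsequences). Hence $\lambda_2$ changes by at most $3$. Efron--Stein resamples $\bw_i$ to an independent copy $\bw_i'$, and nothing changes unless $\bw_i\neq\bw_i'$. This event has probability at most $2(1-p_d)$, since at least one of the two letters must differ from $d$. Therefore
\[
\mathrm{Var}[\blambda_2]\le \tfrac12\cdot n\cdot 9\cdot 2(1-p_d)=O((1-p_d)n).
\]
The same argument gives $\mathrm{Var}[\blambda_1]=O((1-p_d)n)$ and $\mathrm{Var}[\blambda_1+\blambda_2]=O((1-p_d)n)$, which I will use below.

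\emph{Bias, upper side.} Let $\mu=\shape(\bT^{<d})$. This is the RSK shape of the subword $\bw^{<d}$ obtained by deleting all $d$'s, and $\mu$ interlaces $\blambda$, so $\blambda_2\le \mu_1$. Conditioned on its length $\bm\sim\mathrm{Bin}(n,1-p_d)$, the word $\bw^{<d}$ is i.i.d.\ from $q=(p_1,\dots,p_{d-1})/(1-p_d)$. Its largest letter has probability $q_{d-1}$, and the top of $q$ may be degenerate, so Lemma~\ref{lem:first-row} is useless here. Instead I would use a gap-free estimate
\[
\E[\mu_1\mid \bm]\le q_{d-1}\bm+O\big(\sqrt{q_{d-1}\bm}\big)+O(1).
\]
This follows by dominating the word by a two-letter word (top letter versus everything else) via the majorization/monotonicity of $\E[\lambda_1]$ in $p$ from~\cite{OW17a}, and then applying the standard $\sqrt{\cdot}$ bound for two-letter longest increasing subsequences. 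Averaging over $\bm$ gives $\E[\blambda_2]-p_{d-1}n\le O(\sqrt{p_{d-1}n})+O(1)$. Its square is $O(p_{d-1}n)$, plus an $O(1)$ that I must absorb. This absorption is fine when $(1-p_d)n\gtrsim 1$; otherwise $\blambda_2=0$ with high probability and I will handle that case directly.

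\emph{Bias, lower side (main obstacle).} By Greene's theorem, $\blambda_1+\blambda_2\ge \bh_d+\bh_{d-1}$, so
\[
\E[\blambda_2]-p_{d-1}n\ \ge\ -\big(\E[\blambda_1]-p_dn\big).
\]
Everything therefore reduces to showing $\E[\blambda_1]-p_dn\le O(\sqrt{(p_{d-1}+1-p_d)n})$ with no dependence on the gap $p_d-p_{d-1}$, since Lemma~\ref{lem:first-row} blows up as the gap closes. I would first try to write $\blambda_1\le \max_t(\#d\text{'s after }t+\mathrm{LIS}(\bw^{<d}_{\le t}))$ and control the resulting random walk. The competition is between the $d$'s, which accrue at rate $p_d$, and the longest increasing subsequence of the prefix, which grows at rate $p_{d-1}$ with $\sqrt{\cdot}$ fluctuations. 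A maximal inequality then bounds the expected maximum overshoot by $O(\sqrt{(p_{d-1}+(1-p_d))n})$. If that step proves too lossy, the fallback is again to reduce by majorization to three letters (the top two and a merged rest), where the computation is explicit.
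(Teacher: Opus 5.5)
The paper gives no proof of this lemma; it imports it from \cite{OW17a}. Your bias--variance plan is therefore a genuinely different, self-contained route, and several of its pieces are sound. The Efron--Stein bound $\mathrm{Var}[\blambda_2]=O((1-p_d)n)$ is fine; in fact $\lambda_1+\lambda_2$ changes by at most $1$ under a one-letter change, since the changed position lies in at most one of the two disjoint subsequences. The interlacing bound $\blambda_2\le\bmu_1$ and the Greene bound $\blambda_1+\blambda_2\ge\bh_d+\bh_{d-1}$ are also correct.

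The gap is the LIS estimate on which both bias bounds rest. Your gap-free bound $\E[\bmu_1\mid m]\le q_{d-1}m+O(\sqrt{q_{d-1}m})+O(1)$ is false with absolute constants. Spread the mass $1-p_d$ evenly over $K\gg n^2$ letters. Then the non-$d$ subword has distinct letters with high probability, so it is a uniformly random permutation pattern and $\bmu_1\approx 2\sqrt{m}$, while your right-hand side is $O(1)$. With $p_d=1/2$ this also refutes your conclusion $\E[\blambda_2]-p_{d-1}n\le O(\sqrt{p_{d-1}n})+O(1)$: here $\E[\blambda_2]\ge\E[\bmu_2]=\Theta(\sqrt n)$ while $p_{d-1}n\ll 1$. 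Nor can the two-letter reduction prove the estimate. Merging all non-top letters gives the distribution $(1-q_{d-1},q_{d-1})$, whose largest mass is $1-q_{d-1}>q_{d-1}$ whenever $q_{d-1}<1/2$. Majorization by it therefore only yields leading term $(1-q_{d-1})m$, not $q_{d-1}m$.

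What the lemma actually needs, and what is true, is the weaker dimension-free bound $\E[\mathrm{LIS}]\le q_{\max}m+C\sqrt m$ for i.i.d.\ words. Since $p_{d-1}\le 1-p_d$, the resulting $\E[\blambda_2]-p_{d-1}n\le C\sqrt{(1-p_d)n}$ squares to something absorbed by the $(1-p_d)n$ term, and no $O(1)$ case split is needed. But this is a nontrivial external input that you must cite or prove.

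Your lower side depends on the same input. The ``random walk plus maximal inequality'' step is only a heuristic, since the prefix LIS is not a walk. It can be made rigorous as follows. Let $L_t$, $D_t$, $M_t$ be the LIS of the non-$d$ letters, the number of $d$'s, and the number of non-$d$ letters among the first $t$ positions. Write $\blambda_1-\bh_d=\max_t(L_t-D_t)$ and split $L_t-D_t$ into three parts: $Y_{M_t}\coloneqq L_t-q_{d-1}M_t$, two centered binomial walks (handled by Doob), and the nonpositive drift $-(p_d-p_{d-1})t$. Then observe that $Y_j$ is a submartingale, because appending the top letter always extends the LIS. Doob's $L^2$ inequality gives $\E[\max_j Y_j^+]\le 2\|Y_m\|_2$, and $\|Y_m\|_2=O(\sqrt m)$ by Efron--Stein together with the same $O(\sqrt m)$ bias bound.

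Finally, as you concede, this only gives the inequality up to some absolute constant rather than $84$ and $42$. That is enough for Theorem~\ref{thm:nonasymptotic-bounds} with a different $C$, but it is not the lemma as stated.
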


If a tuple $\lambda = (\lambda_1, \ldots, \lambda_d)$ satisfies $\lambda_1 \geq \cdots \geq \lambda_d$ but not necessarily $\lambda_d \geq 0$, then we call it a \emph{staircase}.
A staircase can be viewed as a Young diagram which allows negative boxes.
\ignore{
We will need to generalize SSYTs to the case of negative boxes as well.
To do so, we will first consider an alternative representation for a traditional SSYT $T$.
For $1 \leq i \leq d$, write $T^i = (T^i_1, \ldots, T^i_i)$ for the Young diagram consisting of all boxes from $T$ containing letters whose values are $\leq i$. These Young diagrams satisfy the \emph{interlacing condition} $T^i \precsim T^{i+1}$, which means that
\begin{equation*}
    T^{i+1}_1 \geq T^i_1 \geq T^{i+1}_2 \geq T^i_2 \geq \cdots \geq T^i_i \geq T^{i+1}_{i+1}.
\end{equation*}
Indeed, any tuple of Young diagrams $(T^1, \ldots, T^d)$ which satisfies the interlacing conditions $T^i \precsim T^{i+1}$ corresponds to an SSYT $T$, and vice versa. We illustrate this in Figure~\ref{fig:example_Young_tableaux}.
Later, we will consider the more general case where this tuple consists of staircases (rather than Young diagrams) which satisfy the interlacing conditions; in this case, we will refer to it as a \emph{Gelfand-Tsetlin (GT) pattern}.}

\subsection{Representation theory of the symmetric and unitary groups}

We write $S_n$ for the symmetric group of permutations acting on $n$ items.
The irreducible representations (irreps) of $S_n$ are indexed by Young diagrams $\lambda \vdash n$ and are written $(P_{\lambda}, p_{\lambda})$, where $P_{\lambda}$ is referred to as the \emph{Specht module}.
There is an orthonormal basis of $P_{\lambda}$ with one vector $\ket{S}$ for each SYT $S$ consisting of $n$ boxes.
We write $\dim(\lambda)$ for the dimension of $P_{\lambda}$, which is therefore equal to the number of SYTs with $n$ boxes. 

We write $GL(d)$ for the general linear group consisting of all invertible $d \times d$ complex matrices.
The polynomial irreps of $GL(d)$ are indexed by Young diagrams $\lambda$ with $\ell(\lambda) \leq d$ and are written $(Q_{\lambda}^d, q_{\lambda}^d)$, where $Q_{\lambda}^d$ is referred to as the \emph{Weyl module}.
They are called polynomial irreps because given $M \in GL(d)$, the matrix entries of $q_{\lambda}^d(M)$ are polynomials in the matrix entries of $M$.
There is an orthonormal basis of $Q_{\lambda}^d$ with one vector $\ket{T}$ for each SSYT $T$ with alphabet $[d]$.
When the inputs are restricted to the unitary group $U(d)$, then $(Q_{\lambda}^d, q_{\lambda}^d)$ is also an irrep of $U(d)$. 

There is a natural representation $P(\cdot)$ of $S_n$ on the space $(\C^d)^{\otimes n}$ which acts as follows: for each $\pi \in S_n$,
\begin{equation*}
    P(\pi) \cdot \ket{i_1} \otimes \cdots \otimes \ket{i_n} = \ket{i_{\pi^{-1}(1)}} \otimes \cdots \otimes \ket{i_{\pi^{-1}(n)}}, \qquad \text{for all $i_1, \ldots, i_n \in [d]$}.
\end{equation*}
Similarly, there is a natural representation $Q(\cdot)$ which acts on this space as follows: for every $M \in GL(d)$,
\begin{equation*}
    Q(M) \cdot \ket{i_1} \otimes \cdots \otimes \ket{i_n} = (M \ket{i_1}) \otimes \cdots \otimes (M \ket{i_n}), \qquad \text{for all $i_1, \ldots, i_n \in [d]$}.
\end{equation*}
These two representations commute, i.e.\ $P(\pi) \cdot Q(M) = Q(M) \cdot P(\pi)$, and so their joint action on $(\C^d)^{\otimes n}$ is well-defined.
A powerful result from representation theory known as \emph{Schur--Weyl duality} states that this joint action has an elegant decomposition into the irreps of the two groups.
Formally, this result states that there is a unitary $\schur$ acting on $(\C^d)^{\otimes n}$ such that
\begin{equation}\label{eq:Schur--Weyl}
    \schur \cdot P(\pi) Q(M) \cdot \schur^{\dagger} = \bigoplus_{\substack{\lambda \vdash n \\ \ell(\lambda)\leq d}} p_{\lambda}(\pi) \otimes q_{\lambda}^d(M).
\end{equation}
The orthonormal basis that results from applying this unitary is known as the \emph{Schur basis}, and it consists of vectors of the form $\ket{\lambda, S, T}$, where $\lambda$ is a Young diagram satisfying $\lambda \vdash n$ and $\ell(\lambda) \leq d$, $S$ is an SYT of shape $\lambda$, and $T$ is an SSYT of shape $\lambda$ and alphabet $[d]$.
These vectors have the property that $\ket{\lambda, S, T}$ is a superposition over standard basis vectors $\ket{w}$, where $w \in [d]^n$, in which $w$ and $T$ have the same type.

In this work, we study inputs of the form $Q(\rho) = \rho^{\otimes n}$.
Applying \eqref{eq:Schur--Weyl} to the case of $\pi = I$ and $M = \rho$, Schur--Weyl duality states that
\begin{equation} \label{eq:Schur--Weyl-states}
    \schur \cdot \rho^{\otimes n} \cdot \schur^{\dagger} = \bigoplus_{\substack{\lambda \vdash n \\ \ell(\lambda)\leq d}} I_{\dim(P_\lambda)} \otimes q_{\lambda}^d(\rho).
\end{equation}
Note that $\rho$ is only an element of $GL(d)$ when all of its eigenvalues are nonzero. However, this expression is still well-defined even in the case when $\rho$ has eigenvalues which are equal to 0. This is because $q_{\lambda}^d(\cdot)$ is a polynomial irrep of $GL(d)$, and so the entries of $q_{\lambda}^d(\rho)$ are polynomials in the entries of $\rho$. 
Since $\rho^{\otimes n}$ is block diagonal in the Schur basis, it is natural to measure it according to the projective measurement $\{\Pi_{\lambda}\}_{\lambda \vdash n, \ell(\lambda) \leq d}$, in which $\Pi_{\lambda}$ projects onto the $\lambda$-irrep space.
Doing so is known as \emph{weak Schur sampling}.

Let us suppose further that $\rho$ is diagonal in the standard basis, so that we can write
\begin{equation*}
    \rho = \sum_{i=1}^d p_i \cdot \ketbra{i}.
\end{equation*}
Then for a word $w \in [d]^n$ with type $h$, the standard basis vector $\ket{w}$ is an eigenvector of $\rho^{\otimes n}$ with eigenvalue $p^h$.
This means that if $T$ is an SSYT of type $h$, then $\ket{\lambda, S, T}$ is also an eigenvector of $\rho^{\otimes n}$ with eigenvalue $p^T$, because $\ket{\lambda, S, T}$ is a superposition over standard basis vectors $\ket{w}$ of the same type as $T$.
Thus, if we measure $\rho^{\otimes n}$ in the Schur basis, we will receive $\ket{\blambda, \bS, \bT}$ with probability $p^{\bT}$, which means that the $(\blambda, \bS, \bT)$ we observe is distributed as a sample from the RSK distribution $\rsk^n(p)$, due to \eqref{eq:rsk-pdf}. Thus,
\begin{equation}\label{eq:schur-rsk}
    \schur \cdot \rho^{\otimes n} \cdot \schur^{\dagger}
    = \sum_{\lambda, S, T} p^T \cdot \ketbra{\lambda, S, T}
    = \E_{(\blambda, \bS, \bT) \sim \rsk^n(p)} \big[\ketbra{\blambda, \bS, \bT}\big].
\end{equation}

Later, we will consider the 
\emph{rational} irreps of $GL(d)$, whose matrix entries are \emph{rational} functions in the matrix entries of their inputs.
These are indexed by staircases $\lambda$ and also written $(Q_{\lambda}^d, q_{\lambda}^d)$.

\subsection{The dual Clebsch--Gordan transform}

We will now survey the dual Clebsch--Gordan transform. For an excellent introduction to this topic, see~\cite{Ngu24}. 
The dual of the defining representation $Q_{(1)}^d\cong \C^d$ is denoted by $(Q_{(1)}^d)^*$, and has
standard basis
\begin{equation*}
    \ket{1^*},\ldots,\ket{d^*}.
\end{equation*}
Given a matrix $U \in U(d)$, it acts on $(Q_{(1)}^d)^*$ as $\overline{U}$.
For every staircase $\lambda$, the tensor product $(Q_{(1)}^d)^*\otimes Q_\lambda^d$ is not necessarily itself irreducible, and it therefore decomposes further into 
irreps of $GL(d)$.
This decomposition is known as the \emph{dual Clebsch--Gordan transform}, and it takes the form
\begin{equation}\label{eq:cg-transform}
    (Q_{(1)}^d)^*\otimes Q_\lambda^d
    \cong\bigoplus_{\substack{i:\lambda-e_i\\\text{is a staircase}}} Q_{\lambda-e_i}^d.
\end{equation}
Here, $\lambda - e_i$ is the staircase obtained by subtracting $1$ from the $i$-th component of $\lambda$. Pictorially, one can think of $\lambda-e_i$ as the staircase obtained from $\lambda$ by removing a box from the $i$-th row. We note that this decomposition is \emph{multiplicity-free},
meaning that each irrep of $GL(d)$ occurs on the right-hand side either zero or one times.
We write $\ucg^{\lambda}$ for the unitary which implements this decomposition. For every $U \in U(d)$, it satisfies
\begin{equation}\label{eq:cg}
    \ucg^{\lambda} \cdot
    \big(\overline{U}\otimes q_\lambda^d(U)\big) \cdot \bigl(\ucg^{\lambda}\bigr)^\dagger = \bigoplus_{\substack{i:\lambda-e_i\\\text{is a staircase}}} q_{\lambda-e_i}^d(U).
\end{equation}
Thus, the Clebsch--Gordan transform
branches the space into components corresponding to each way of removing a single box from $\lambda$. 

More generally, we will be interested in tensoring $k$ copies of $(Q^d_{(1)})^*$ onto $Q_{\lambda}^d$.
To understand how this decomposes into irreps,
let us first consider a tuple $r = (r_1, \ldots, r_k) \in [d]^k$.
We can view this as specifying a sequence of staircases; in particular, the $t$-th staircase is given by
\begin{equation*}
    \lambda - \sum_{i=1}^t e_{r_i},
\end{equation*}
and it is derived from the $(t-1)$-st staircase in the path by removing the final box in row $r_t$ (with the zeroth staircase taken to be $\lambda$ itself). Not all tuples $r$ give rise to valid staircases, for example, if $\lambda_i = \lambda_{i+1}$, and we remove a box from row $i$.
We call $r$ a \emph{$k$-path} if the $t$-th staircase is valid for $t = 0, 1, \dots, k$.
The $t=k$ endpoint of this path we denote by $\lambda(r)$.
Then we claim that the following decomposition is true:
\begin{equation}\label{eq:kfold-cg}
\big((Q_{(1)}^d)^*\big)^{\otimes k}\otimes Q_\lambda^d \cong \bigoplus_{k\text{-paths } r}  Q_{\lambda(r)}^d.
\end{equation}
Let us show this by induction.
For the base case, note that the $k=1$ case is equivalent to \eqref{eq:cg-transform}.
Next, let us assume it holds for the value of $k$.
Then by the inductive hypothesis
\begin{equation*}
\big((Q_{(1)}^d)^*\big)^{\otimes (k+1)}\otimes Q_\lambda^d 
 \cong (Q_{(1)}^d)^* \otimes \big((Q_{(1)}^d)^*\big)^{\otimes k}\otimes Q_\lambda^d \cong (Q_{(1)}^d)^* \otimes \Big(\bigoplus_{k\text{-paths } r}  Q_{\lambda(r)}^d\Big)
\cong  \bigoplus_{k\text{-paths } r}  (Q_{(1)}^d)^* \otimes Q_{\lambda(r)}^d.
\end{equation*}
Next, applying the Clebsch--Gordan transform, we have
\begin{equation}\label{eq:block-cg}
    \bigoplus_{k\text{-paths } r}  (Q_{(1)}^d)^* \otimes Q_{\lambda(r)}^d
    \cong 
    \bigoplus_{k\text{-paths } r}\Big(\bigoplus_{\substack{i:\lambda(r)-e_i\\\text{is a staircase}}} Q_{\lambda(r)-e_i}^d\Big)
    \cong \bigoplus_{(k+1)\text{-paths } r'}  Q_{\lambda(r')}^d.
\end{equation}
This completes the proof.

The unitary which implements the transformation in \eqref{eq:block-cg} we denote by $\ucg^{\lambda, k \rightarrow k+1}$; it can be written as
\begin{equation}\label{eq:one-step-cg}
    \ucg^{\lambda, k \rightarrow k+1} : \bigoplus_{k\text{-paths } r}  (Q_{(1)}^d)^* \otimes Q_{\lambda(r)}^d \rightarrow \bigoplus_{(k+1)\text{-paths } r'}  Q_{\lambda(r')}^d,
    \quad \text{where}
    \quad
    \ucg^{\lambda, k \rightarrow k+1} = \bigoplus_{k\text{-paths } r} \ucg^{\lambda(r)}.
\end{equation}
For convenience, we will also set $\ucg^{\lambda, 0 \rightarrow 1} \coloneqq \ucg^{\lambda}$.
In addition, the unitary which implements the overall transform in~\eqref{eq:kfold-cg} we denote by 
\begin{equation}\label{eq:k-fold}
    \ucg^{\lambda,k}: \big((Q_{(1)}^d)^*\big)^{\otimes k}\otimes Q_\lambda^d \rightarrow\bigoplus_{k\text{-paths } r} Q_{\lambda(r)}^d.
\end{equation}
From the above inductive proof, we have the equality
\begin{equation}\label{eq:induction}
    \ucg^{\lambda,k} = \ucg^{\lambda, k-1 \rightarrow k}\cdots \ucg^{\lambda, 1 \rightarrow 2}\ucg^{\lambda, 0 \rightarrow 1}.
\end{equation}
Rewriting, the transformation satisfies, for all $U\in U(d)$,
\begin{equation*}
    \ucg^{\lambda,k}\cdot \big(\overline{U}^{\otimes k}\otimes q_\lambda^d(U) \big)\cdot (\ucg^{\lambda,k})^\dagger
    =\bigoplus_{k\text{-paths } r} q_{\lambda(r)}^d(U).
\end{equation*}
We note that this decomposition is no longer multiplicity-free,
as multiple paths may have the same staircase as their endpoint.
The special case we will care most about is when the starting staircase $\lambda$ is a Young diagram satisfying $\lambda_1 - \lambda_2 \geq k$;
in this case, the irrep $Q_{\lambda - k e_1}^d$ \emph{does} appear on the right-hand side of \eqref{eq:k-fold} with multiplicity one, corresponding to the unique path $r = (1, \ldots, 1)$.

Explicit formulas for the matrix entries of $\ucg^{\lambda}$ in a given basis are known as \emph{Clebsch--Gordan coefficients}.
We will need a particular set of Clebsch--Gordan coefficients, and to motivate these formulas, let us first state the explicit bases that we will work in.
Let $\lambda$ be a Young diagram, and suppose $\lambda_1 - \lambda_2 \geq 1$,
so that $\lambda - e_1$ is also a Young diagram.
Via \eqref{eq:cg},
we see that the input to $\ucg^{\lambda}$ has a basis given by $\ket{i^*} \otimes \ket{T}$, where $1 \leq i \leq d$ and $T$ is an SSYT of shape $\lambda$ and alphabet $[d]$,
and the output has a basis given by $\ket{\lambda - e_i, T'}$, where $T'$ is an SSYT of shape $\lambda - e_i$ and alphabet $[d]$.

\begin{lemma}[{\cite[Sec.~18.2.10, Eq.~(5)]{VK92}}]\label{lem:cg-coeff} 
    Let $\lambda \vdash n$ be a Young diagram such that $\lambda_1 - \lambda_2 \geq 1$, and write $\lambda' = \lambda - e_1$.
    Let $T$ be an SSYT of shape $\lambda$ and alphabet $[d]$.
    Then if $T'$ is an SSYT of shape $\lambda'$ and alphabet $[d]$,
    \begin{equation*}
        \bra{\lambda', T'} \cdot \ucg^\lambda \cdot \ket{d^*}\ket{T} = 0
    \end{equation*}
    unless the final box in the first row of $T$ contains the letter ``$d$'', and $T'$ is the SSYT which results from removing this box.
    In this case, we have
    \begin{equation*}
        |\bra{\lambda', T'} \cdot \ucg^\lambda \cdot \ket{d^*}\ket{T}|^2 =
        \frac{\prod_{i=1}^{d-1}(\lambda_1 - \mu_i + i - 1)}{\prod_{i=2}^{d}(
        \lambda_1 - \lambda_i + i - 1)},
\end{equation*}
where $\mu = \shape(T^{< d})$.
\end{lemma}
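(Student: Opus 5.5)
The natural tool here is the Gelfand--Tsetlin (GT) structure of $Q_\lambda^d$, i.e.\ the branching $U(d)\downarrow U(d-1)$. In this basis $\ket{T}$ is determined by the chain of shapes $\shape(T^{<j})$, and its $U(d-1)$-content is $\mu=\shape(T^{<d})$. My plan is to get the vanishing statement from symmetry (weight conservation and $U(d-1)$-invariance of $\ket{d^*}$), and then compute the single surviving coefficient through a Casimir-operator projector.

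\textbf{Step 1 (selection rules).} I would first argue with weights. Under the diagonal torus, $\ket{d^*}$ has weight $-e_d$ and $\ket{T}$ has weight equal to its type. So $\ket{d^*}\ket{T}$ can only overlap $\ket{\lambda',T'}$ when $T'$ has the type of $T$ with one $d$ removed. Next, the vector $\ket{d^*}$ is fixed by the subgroup $U(d-1)\oplus 1\subseteq U(d)$. Since $\ucg^\lambda$ intertwines the $U(d)$-actions, it also intertwines the $U(d-1)$-actions. Hence $\ket{d^*}\ket{T}$ lies in the $U(d-1)$-irrep of type $\mu$ and projects only onto GT vectors with the same $U(d-1)$ pattern, namely $T'^{<d}=T^{<d}$. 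Together these force $T'$ to be $T$ with one $d$-box removed from the end of row $1$. This is possible only if that box holds a $d$, i.e.\ $\lambda_1>\mu_1$, which gives the claimed vanishing. Schur's lemma for $U(d-1)$ further shows that the surviving coefficient depends only on $(\lambda,\mu)$ and not on the inner pattern of $T^{<d}$.

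\textbf{Step 2 (Casimir projector).} I would consider $C=\sum_{a,b}\overline{E}_{ab}\otimes E_{ab}$, the mixed quadratic Casimir on $(Q^d_{(1)})^*\otimes Q^d_\lambda$. On the summand $Q^d_{\lambda-e_i}$, $C$ acts as a scalar $c_i$ that is affine in $\lambda_i-i$. These scalars are distinct, so the projector onto $Q^d_{\lambda-e_1}$ is the Lagrange polynomial $P_1(C)=\prod_{i\neq 1}(C-c_i)/(c_1-c_i)$. By Step 1, the desired quantity equals $\bra{d^*}\bra{T}P_1(C)\ket{d^*}\ket{T}$. This avoids working through Vilenkin--Klimyk's general formula.

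\textbf{Step 3 (reduction to $\mu$).} Restrict to the invariant subspace spanned by $\ket{d^*}\ket{T}$ and by the vectors $\ket{a^*}\,E_{ad}\ket{T}$ for $a<d$. This subspace is $U(d-1)$-equivariantly a copy of the $\mu$-isotypic data together with the irreps $\mu+e_j$ of $U(d-1)$ obtained from $E_{ad}$. On this subspace $C$ acts as a small tridiagonal or arrow-shaped operator. Its entries are given by the standard GT norms $\|E_{d-1,d}\text{-type}\ket{T}\|^2$, which are products of linear factors in $\lambda_i-i$ and $\mu_j-j$. The upshot should be the identity
\[
\bra{d^*}\bra{T}P_1(C)\ket{d^*}\ket{T}=\operatorname{Res}_{x=\lambda_1}\frac{\prod_{j=1}^{d-1}(x-\mu_j+j-1)}{\prod_{i=1}^{d}(x-\lambda_i+i-1)},
\]
and evaluating this residue gives exactly the stated ratio. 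As a check on normalization, summing these residues over all $i$ gives $1$ by a partial-fraction identity, since the numerator and denominator degrees differ by one. This matches the unitarity requirement that the $d$ coefficients $|\bra{\lambda-e_i,T'_i}\ucg^\lambda\ket{d^*}\ket{T}|^2$ sum to $1$.

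The main obstacle I expect is Step 3: pinning down the Casimir eigenvalues and the GT matrix elements of $E_{ad}$ with the correct signs and shifts, and then recognizing the resulting resolvent entry $\langle (x-C)^{-1}\rangle$ as that rational function. I would first verify the formula for $d=2$, where it reads $(\lambda_1-\mu_1)/(\lambda_1-\lambda_2+1)$ and can be checked against the $SU(2)$ Clebsch--Gordan coefficient for spin $\tfrac12$. I would then induct on $d$ using the resolvent/partial-fraction structure. If that becomes messy, the fallback is to cite the reduced Wigner coefficient formula directly.
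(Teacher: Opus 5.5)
The paper gives no proof of this lemma; it imports the formula from Vilenkin--Klimyk's reduced Wigner coefficients. Your route is a self-contained derivation, and Steps 1 and 2 are sound. The weight and $U(d-1)$ selection rules give the vanishing statement and show that the modulus depends only on $(\lambda,\mu)$. With your $C=\sum_{a,b}\overline{E}_{ab}\otimes E_{ab}$ (first factor acting as the matrix $E_{ab}$ on the $\ket{i^*}$ basis), $C$ acts on $Q^d_{\lambda-e_i}$ by the distinct scalars $c_i=\lambda_i-i+d$.

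Step 3, however, is only asserted (``should be''), and the subspace you name is wrong for $d\ge 3$. The span of $\ket{d^*}\ket{T}$ and the individual vectors $\ket{a^*}E_{ad}\ket{T}$ is not $C$-invariant. Indeed, $C\,\ket{a^*}E_{ad}\ket{T}$ contains $\ket{d^*}E_{da}E_{ad}\ket{T}$, and for a single $a$ the vector $E_{da}E_{ad}\ket{T}$ need not be a multiple of $\ket{T}$; only the sum over $a$ commutes with $U(d-1)$. The correct arena is the multiplicity space $M$ of $Q^{d-1}_\mu$ inside $(Q^d_{(1)})^*\otimes Q^d_\lambda$. It has an orthonormal basis consisting of $u_0$ (the copy in $\ket{d^*}\otimes Q^{d-1}_\mu$, which carries $\ket{d^*}\ket{T}$) and $u_j$ for $j\in J=\{j\le d-1:\mu_j<\lambda_j\}$ (the copy in $(Q^{d-1}_{(1)})^*\otimes Q^{d-1}_{\mu+e_j}$). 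On $M$, $C$ acts by a matrix $C_M$ with simple eigenvalues $c_i$, $i\in J\cup\{d\}$.

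Once you work on $M$, the obstacle you anticipate disappears: you need no Gelfand--Tsetlin matrix elements of $E_{ad}$ and no induction on $d$. The terms of $C$ involving the index $d$ only connect $u_0$ to the $u_j$. The remaining part is the mixed Casimir of $U(d-1)$, which acts on $u_j$ by $\alpha_j=\mu_j-j+d$. So $C_M$ is an arrow matrix whose compression to $\mathrm{span}\{u_j\}$ is diagonal, and Cramer's rule gives
\[
\bra{u_0}(y-C_M)^{-1}\ket{u_0}=\frac{\prod_{j\in J}(y-\mu_j+j-d)}{\prod_{i\in J\cup\{d\}}(y-\lambda_i+i-d)} .
\]
Set $y=x+d-1$ and reinsert the cancelling pairs for $j\notin J$ (where $\mu_j=\lambda_j$). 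This yields exactly your rational function, and $\bra{u_0}P_1(C_M)\ket{u_0}$ is its residue at $y=c_1$, i.e.\ at $x=\lambda_1$, which is the stated ratio.

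With this repair your argument is a complete proof. Compared with the paper's one-line citation, it costs some representation theory. In return, it avoids matching Vilenkin--Klimyk's dual-versus-defining and phase conventions, and it produces all $d$ coefficients at once as residues, with unitarity as the partial-fraction check you already noted.
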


\subsection{An algorithm for quantum purity amplification}

The $k$-copy quantum purity amplification algorithm we study is essentially a straightforward generalization of the 1-copy-output algorithm from \cite{LFI+25}.
Given the input $\rho^{\otimes n}$,
the algorithm begins by performing weak Schur sampling and receiving a Young diagram $\blambda \vdash n$.
It then discards the permutation register, resulting in a state supported on the $Q_{\blambda}^d$ irrep.
The final step is for the algorithm to apply to the state a covariant quantum channel
\begin{equation*}
\Phi_{\blambda}^{(k)}:\calL(Q^d_{\blambda})\to \calL((\C^d)^{\otimes k}),
\end{equation*}
which we will define below.
The goal is for the output of this channel to be close to our desired state $\ket{v_d^{\otimes k}}$.

To define our channel, we will henceforth fix a Young diagram $\lambda \vdash n$.
Recall from the previous section the $k$-fold dual Clebsch--Gordan transform $\ucg^{\lambda, k}$, which satisfies the property that for any $U \in U(d)$,
\begin{equation}\label{eq:k-dual-cg}
    \ucg^{\lambda, k} \cdot \big(\overline{U}^{\otimes k} \otimes q_{\lambda}^d(U)\big) \cdot (\ucg^{\lambda,k})^{\dagger}
    = \bigoplus_{k\text{-paths } r} q_{\lambda(r)}^d(U).
\end{equation}
Recall also that when $\lambda_1 - \lambda_2 \geq k$, the $(\lambda - k e_1)$-irrep occurs on the right-hand side with multiplicity one.
Let us write $E_{\lambda - k e_1}$ for the orthogonal projector onto the unique copy of this irrep.
For notational convenience, we will write $\calH_{\mathsf{A}}$ for the Hilbert space $((Q^d_{(1)})^*)^{\otimes k}$
and $\calH_{\mathsf{B}}$ for the Hilbert space $Q_{\lambda}^d$.

\begin{definition}[QPA channel]
    If $\lambda_1 - \lambda_2 < k$, we define $\Phi_{\lambda}^{(k)}$ to output the maximally mixed state. Otherwise, when $\lambda_1 - \lambda_2 \geq k$, we define it as follows.
    Consider the matrix
\begin{equation}\label{eq:j-def}
    J_\lambda^{(k)}
    \coloneqq
    \frac{\dim(Q^d_\lambda)}{\dim(Q^d_{\lambda-ke_1})}\cdot (\ucg^{\lambda,k})^\dagger \cdot E_{\lambda-ke_1} \cdot \ucg^{\lambda,k},
\end{equation}
which acts on the space $\calH_{\mathsf{A}} \otimes \calH_{\mathsf{B}}$.
We will define $\Phi_{\lambda}^{(k)}$ to be the quantum channel whose Choi matrix is $J_{\lambda}^{(k)}$. In other words, for any~$X$ in $\calL(Q_{\lambda}^d)$, we have
\begin{equation}\label{eq:channel-def}
    \Phi_{\lambda}^{(k)}(X) = \tr_{\mathsf{B}}\Big(J_\lambda^{(k)} \cdot \big(I_{\mathsf{A}} \otimes X^T\big)\Big),
\end{equation}
where the transpose $X^T$ is taken with respect to the SSYT basis in $Q_{\lambda}^d$.
\end{definition}

Strictly speaking, as we have defined the QPA channel, the output is an element of $\big((Q^d_{(1)})^*\big)^{\otimes k}$, which we then identify with $(\C^d)^{\otimes k}$ via the map $\ket{i^*} \to \ket{i}$. It is easiest simply to choose our irrep's basis such that the dual basis vectors $\ket{i^*}$ coincide with the standard computational basis vectors $\ket{i}$.

As it will turn out, we will almost always be in the case where $\lambda_1 - \lambda_2 \geq k$,
and so it will be fine to use this assumption when designing our channel.
Conversely, we will view the case of $\lambda_1 - \lambda_2 < k$ as a rare failure mode, and as a result we are happy to set it to an arbitrary quantum channel.
Now we check that this is indeed a quantum channel.
First, we will show the following helper lemma.
For the rest of the subsection, we will use the following notational shorthand: $U_{\mathsf{A}} \coloneqq U^{\otimes k}$ and $U_{\mathsf{B}} \coloneqq q_{\lambda}^d(U)_{\mathsf{B}}$.

\begin{lemma}\label{lem:choi-commutes}
    For any unitary $U$,
    \begin{equation*}
        J_{\lambda}^{(k)}
        \cdot \big(U^*_{\mathsf{A}} \otimes U_{\mathsf{B}}\big)
        = \big(U^*_{\mathsf{A}} \otimes U_{\mathsf{B}}\big) \cdot J_{\lambda}^{(k)}.
    \end{equation*}
\end{lemma}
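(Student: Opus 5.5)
Since $J_\lambda^{(k)}$ is defined through $\ucg^{\lambda,k}$, the plan is to conjugate everything into the Clebsch--Gordan picture and use that $E_{\lambda-ke_1}$ commutes with the block-diagonal action there. First, identify the operator $U^*_{\mathsf{A}} \otimes U_{\mathsf{B}}$ with $\overline{U}^{\otimes k} \otimes q_\lambda^d(U)$. Here $U^* = \overline{U}$ denotes entrywise complex conjugation, $U_{\mathsf{A}} = U^{\otimes k}$, and $U_{\mathsf{B}} = q_\lambda^d(U)$. This is exactly how $U(d)$ acts on $\calH_{\mathsf{A}}\otimes\calH_{\mathsf{B}} = ((Q^d_{(1)})^*)^{\otimes k}\otimes Q^d_\lambda$, given the identification $\ket{i^*}\leftrightarrow\ket{i}$. (If $U^*$ were instead meant as the adjoint, the same argument applies after replacing $U$ by the appropriate unitary, since the statement is for all $U$.)

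By \eqref{eq:k-dual-cg}, we have
\begin{equation*}
\ucg^{\lambda,k}\,(\overline{U}^{\otimes k}\otimes q_\lambda^d(U))\,(\ucg^{\lambda,k})^\dagger = \bigoplus_{k\text{-paths } r} q^d_{\lambda(r)}(U) \eqqcolon D(U).
\end{equation*}
The projector $E_{\lambda-ke_1}$ projects onto one summand of this direct sum, namely the one for the path $r=(1,\ldots,1)$. Since $D(U)$ is block diagonal with respect to the decomposition into path summands, $E_{\lambda-ke_1}$ commutes with $D(U)$. This holds regardless of multiplicities, because $E$ is the projector onto a whole block of the decomposition indexed by paths, and the action preserves each block.

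It then remains to compute. Write $V=\overline{U}^{\otimes k}\otimes q_\lambda^d(U)$, so that $V=(\ucg^{\lambda,k})^\dagger D(U)\,\ucg^{\lambda,k}$. Then
\begin{align*}
J_\lambda^{(k)} V &= c\,(\ucg^{\lambda,k})^\dagger E\,\ucg^{\lambda,k}(\ucg^{\lambda,k})^\dagger D(U)\,\ucg^{\lambda,k} = c\,(\ucg^{\lambda,k})^\dagger E D(U)\,\ucg^{\lambda,k}\\
&= c\,(\ucg^{\lambda,k})^\dagger D(U) E\,\ucg^{\lambda,k} = V J_\lambda^{(k)},
\end{align*}
where $c=\dim(Q_\lambda^d)/\dim(Q_{\lambda-ke_1}^d)$ is the scalar prefactor and we used unitarity of $\ucg^{\lambda,k}$.

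The only delicate point is bookkeeping. One must confirm that the action on the dual factor is indeed $\overline{U}$, matching the notation $U^*_{\mathsf{A}}$, and that $E_{\lambda-ke_1}$ is defined with respect to the path decomposition, so that it commutes with the block-diagonal $D(U)$. Both follow directly from the definitions in \eqref{eq:one-step-cg}--\eqref{eq:k-fold}.
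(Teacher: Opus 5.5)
Your proof is correct and is essentially the paper's argument: conjugate by $\ucg^{\lambda,k}$ using \eqref{eq:k-dual-cg}, commute $E_{\lambda-ke_1}$ past the block-diagonal $\bigoplus_r q^d_{\lambda(r)}(U)$, and conjugate back. One small correction: your parenthetical hedge about $U^*$ meaning the adjoint is unnecessary, since the paper's $U^*$ is entrywise conjugation as you assumed, and it is also not valid, because $U^{\dagger\otimes k}\otimes q^d_\lambda(U)$ is generally not of the form $\overline{V}^{\otimes k}\otimes q^d_\lambda(V)$ (the commutation already fails for $k=1$, $\lambda=(1)$, where $J_\lambda^{(1)}$ is proportional to the projector onto $\sum_i \ket{i}\ket{i}$).
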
 
\begin{proof}
    This is because
\begin{align*}
    (\ucg^{\lambda,k})^\dagger \cdot E_{\lambda-ke_1} \cdot \ucg^{\lambda,k}\cdot \big(U^*_{\mathsf{A}} \otimes U_{\mathsf{B}}\big)
    &={} (\ucg^{\lambda,k})^\dagger \cdot E_{\lambda-ke_1}\cdot \Big(\bigoplus_{k\text{-paths } r} q_{\lambda(r)}^d(U)\Big) \cdot \ucg^{\lambda,k}\\
    &={} (\ucg^{\lambda,k})^\dagger \cdot \Big(\bigoplus_{k\text{-paths } r} q_{\lambda(r)}^d(U)\Big)\cdot E_{\lambda-ke_1} \cdot \ucg^{\lambda,k} \\
    &={} \big(U^*_{\mathsf{A}} \otimes U_{\mathsf{B}}\big) \cdot (\ucg^{\lambda,k})^\dagger \cdot E_{\lambda-ke_1} \cdot \ucg^{\lambda,k},
\end{align*}
where we have applied Equation~\eqref{eq:k-dual-cg} twice.
\end{proof}

\begin{lemma}
$\Phi_{\lambda}^{(k)}$ is a valid quantum channel.
\end{lemma}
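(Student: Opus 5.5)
By the Choi--Jamiołkowski correspondence, $\Phi_{\lambda}^{(k)}$ is completely positive exactly when $J_{\lambda}^{(k)}$ is positive semidefinite. It is trace preserving exactly when $\tr_{\mathsf{A}}(J_{\lambda}^{(k)}) = I_{\mathsf{B}}$. This follows from \eqref{eq:channel-def}, since $\tr(\Phi_\lambda^{(k)}(X)) = \tr\big(\tr_{\mathsf{A}}(J_\lambda^{(k)}) \cdot X^T\big)$ must equal $\tr(X^T)$ for every $X$. The case $\lambda_1-\lambda_2<k$ is trivial, because the replace-by-maximally-mixed map is a channel. So I assume $\lambda_1 - \lambda_2 \geq k$ and check these two conditions.

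\textbf{Complete positivity.} The unitary $\ucg^{\lambda,k}$ is unitary and $E_{\lambda-ke_1}$ is an orthogonal projector. Hence $(\ucg^{\lambda,k})^\dagger E_{\lambda-ke_1}\ucg^{\lambda,k}$ is an orthogonal projector, and in particular PSD. Multiplying by the positive scalar $\dim(Q^d_\lambda)/\dim(Q^d_{\lambda-ke_1})$ preserves this, so $J_\lambda^{(k)} \succeq 0$.

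\textbf{Trace preservation.} I use Lemma~\ref{lem:choi-commutes}. Set $M \coloneqq \tr_{\mathsf{A}}(J_\lambda^{(k)})$, an operator on $\calH_{\mathsf{B}} = Q^d_\lambda$. Conjugating $J_\lambda^{(k)}$ by $U^*_{\mathsf{A}}\otimes U_{\mathsf{B}}$ leaves it unchanged. Taking the partial trace over $\mathsf{A}$ removes the unitary $U^*_{\mathsf{A}}$ by cyclicity of the partial trace on the traced system. This gives $U_{\mathsf{B}} M U_{\mathsf{B}}^\dagger = M$, i.e.\ $M$ commutes with $q^d_\lambda(U)$ for every $U \in U(d)$. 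Since $Q^d_\lambda$ is an irrep of $U(d)$, Schur's lemma gives $M = c \cdot I_{\mathsf{B}}$ for a scalar $c$.

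To find $c$, I compare total traces:
\begin{equation*}
    c \cdot \dim(Q^d_\lambda) = \tr(J_\lambda^{(k)}) = \frac{\dim(Q^d_\lambda)}{\dim(Q^d_{\lambda-ke_1})} \cdot \tr(E_{\lambda-ke_1}) = \dim(Q^d_\lambda).
\end{equation*}
Here I use that conjugation by a unitary preserves trace, and that $E_{\lambda-ke_1}$ projects onto a single copy of $Q^d_{\lambda-ke_1}$. This copy is unique by the multiplicity-one remark following \eqref{eq:k-fold}, so the projector has rank $\dim(Q^d_{\lambda-ke_1})$. Hence $c=1$ and $\tr_{\mathsf{A}}(J_\lambda^{(k)}) = I_{\mathsf{B}}$.

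The only delicate point is the bookkeeping in the Schur's lemma step. One must make sure the invariance from Lemma~\ref{lem:choi-commutes} survives the partial trace over $\mathsf{A}$ in exactly the form $U_{\mathsf{B}} M U_{\mathsf{B}}^\dagger = M$. One must also check that the transpose convention in \eqref{eq:channel-def}, taken in the SSYT basis, does not affect the trace condition. Neither point is a real obstacle: the trace condition only involves $\tr(X^T)=\tr(X)$, and the partial trace is insensitive to unitaries acting on the traced factor.
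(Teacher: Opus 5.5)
Your proposal is correct and follows essentially the same route as the paper: positive semidefiniteness of $J_\lambda^{(k)}$ gives complete positivity, and Lemma~\ref{lem:choi-commutes} combined with Schur's lemma gives $\tr_{\mathsf{A}}(J_\lambda^{(k)}) = c \cdot I$, with $c = 1$ fixed by comparing total traces. Your explicit treatment of the degenerate case $\lambda_1 - \lambda_2 < k$, and your derivation of the trace-preservation criterion directly from \eqref{eq:channel-def}, are small additions that the paper leaves implicit or cites from Watrous.
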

\begin{proof}
First, 
$J_{\lambda}^{(k)}$ is clearly positive semidefinite, and so $\Phi_{\lambda}^{(k)}$ is completely positive (see, e.g.\ \cite[Theorem~2.22]{Wat18}).
Thus, it remains to show that it is trace preserving.
Note that
\begin{align*}
    \tr_{\mathsf{A}}(J_{\lambda}^{(k)}) \cdot U_{\mathsf{B}}
    &= \tr_{\mathsf{A}}(J_{\lambda}^{(k)} \cdot U_{\mathsf{B}})\\
    &= \tr_{\mathsf{A}}(J_{\lambda}^{(k)} \cdot (U^* \cdot U^T)_{\mathsf{A}} \otimes U_{\mathsf{B}})\\
    &= \tr_{\mathsf{A}}((U^*_{\mathsf{A}} \otimes U_{\mathsf{B}}) \cdot J_{\lambda}^{(k)} \cdot U^T_{\mathsf{A}}) \tag{by \ref{lem:choi-commutes}}\\
    &= \tr_{\mathsf{A}}(U_{\mathsf{B}} \cdot J_{\lambda}^{(k)})\tag{by the cyclic property of the partial trace}\\
    &= U_{\mathsf{B}} \cdot \tr_{\mathsf{A}}(J_{\lambda}^{(k)}),
\end{align*}
and so $\tr_{\mathsf{A}}(J_{\lambda}^{(k)})$ commutes with $U_{\mathsf{B}}$, for any unitary $U$.
Thus,
Schur's lemma implies that
\begin{equation*}
\tr_{\mathsf{A}}(J_{\lambda}^{(k)})
= c \cdot I_{\mathrm{dim}(Q_{\lambda}^d)},
\end{equation*}
for some constant $c$.
To compute $c$, we take the trace of both sides, which says that $c \cdot \dim(Q_{\lambda}^d)$ is equal to
\begin{equation*}
    \mathrm{tr}(J_{\lambda}^{(k)})
    = \frac{\dim(Q^d_\lambda)}{\dim(Q^d_{\lambda-ke_1})}\cdot \tr\Big((\ucg^{\lambda,k})^\dagger \cdot E_{\lambda-ke_1} \cdot \ucg^{\lambda,k}\Big)
    = \frac{\dim(Q^d_\lambda)}{\dim(Q^d_{\lambda-ke_1})}\cdot \tr( E_{\lambda-ke_1})
    = \dim(Q_{\lambda}^d).
\end{equation*}
Thus, $c = 1$, and so $\tr_{\mathsf{A}}(J_{\lambda}^{(k)}) = I_{\dim(Q_{\lambda}^d)}$.
As a result, $\Phi_{\lambda}^{(k)}$ is trace preserving as well (see, e.g.\ \cite[Theorem~2.26]{Wat18}) and is therefore a quantum channel.
\end{proof}

Finally, we will need that our purity amplification algorithm is unitarily covariant.
This follows from the following lemma and the fact that the Schur transform is unitarily covariant.

\begin{lemma}\label{lem:unitarily-covariant}
    $\Phi_{\lambda}^{(k)}$ is unitarily covariant, meaning that for any unitary $U$,
    \begin{equation*}
        \Phi_{\lambda}^{(k)}(U_{\mathsf{B}} \cdot X \cdot U_{\mathsf{B}}^\dagger) = U_{\mathsf{A}} \cdot \Phi_{\lambda}^{(k)}(X) \cdot U_{\mathsf{A}}^\dagger.
    \end{equation*}
\end{lemma}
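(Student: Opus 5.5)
We prove the covariance identity directly from the Choi-matrix definition \eqref{eq:channel-def} combined with Lemma~\ref{lem:choi-commutes}. If $\lambda_1 - \lambda_2 < k$, the channel outputs the maximally mixed state, which is invariant under conjugation by $U_{\mathsf{A}}$, so the identity holds trivially. We may therefore assume $\lambda_1-\lambda_2\geq k$.

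The starting point is the transpose of the input:
\begin{equation*}
\big(U_{\mathsf{B}} X U_{\mathsf{B}}^\dagger\big)^T = \overline{U_{\mathsf{B}}}\, X^T\, U_{\mathsf{B}}^T,
\end{equation*}
with the transpose taken in the SSYT basis. Lemma~\ref{lem:choi-commutes} concerns $U_{\mathsf{B}}$ itself, but here $\overline{U_{\mathsf{B}}}$ appears. To match the two, note that for unitary $U$ we have $q_\lambda^d(U)^T = q_\lambda^d(U^T)$, and hence $\overline{q_\lambda^d(U)} = q_\lambda^d(\overline U)$. This is because the SSYT basis is a Gelfand--Tsetlin basis, in which the matrix entries of $q_\lambda^d$ are real polynomials in the entries of $U$. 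Lemma~\ref{lem:choi-commutes} holds for every unitary, so we may apply it with $\overline U$ in place of $U$. This gives that $J_\lambda^{(k)}$ commutes with $(U^T)_{\mathsf{A}} \otimes \overline{U_{\mathsf{B}}}$, which is equivalent to
\begin{equation*}
J_\lambda^{(k)} \cdot \big(I_{\mathsf{A}} \otimes \overline{U_{\mathsf{B}}}\big) = \big((U^T)_{\mathsf{A}}\otimes \overline{U_{\mathsf{B}}}\big)\cdot J_\lambda^{(k)} \cdot \big((U^T)_{\mathsf{A}}^\dagger \otimes I_{\mathsf{B}}\big),
\end{equation*}
with the similar relation for $U_{\mathsf{B}}^T$ on the right. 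Here $(U^T)^\dagger = \overline U$.

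Substituting into \eqref{eq:channel-def}, we get
\begin{equation*}
\Phi_\lambda^{(k)}(U_{\mathsf{B}} X U_{\mathsf{B}}^\dagger)=\tr_{\mathsf{B}}\big(J_\lambda^{(k)}(I\otimes \overline{U_{\mathsf{B}}} X^T U_{\mathsf{B}}^T)\big).
\end{equation*}
We move $U_{\mathsf{B}}^T$ to the front using the cyclic property of the partial trace over $\mathsf{B}$, so that $\overline{U_{\mathsf{B}}}$ and $U_{\mathsf{B}}^T$ sandwich $J_\lambda^{(k)}$. By the commutation relation above, this sandwich equals
\begin{equation*}
(\overline U_{\mathsf{A}}^{\dagger}\otimes I)\, J_\lambda^{(k)}\, (\overline U_{\mathsf{A}} \otimes I),
\end{equation*}
up to the correct placement of the operators. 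Pulling the $\mathsf{A}$ operators outside the partial trace then yields $W_{\mathsf{A}}\,\Phi_\lambda^{(k)}(X)\,W_{\mathsf{A}}^\dagger$ for $W$ equal to $U^{\otimes k}$ or its conjugate.

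The main obstacle is bookkeeping of conjugates. We must check that, under the identification $\ket{i^*}\mapsto\ket i$, the action of $U$ on the dual space (namely $\overline U$) produces the final conjugation by $U_{\mathsf{A}}$, not by $\overline U_{\mathsf{A}}$. Concretely, the Choi commutation is with $U^*_{\mathsf{A}}$, and the transpose in the Choi formula flips this once more. We will verify the exponent and conjugation pattern on the simplest case $k=1$, $\lambda=(1)$, where $\Phi$ should be the identity channel, and then trust the general identity.
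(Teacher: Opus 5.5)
\paragraph{Verdict: genuine gap.} Your skeleton matches the paper's proof:
\begin{itemize}
\item write $(U_{\mathsf{B}}XU_{\mathsf{B}}^\dagger)^T=\overline{U_{\mathsf{B}}}\,X^T U_{\mathsf{B}}^T$;
\item invoke Lemma~\ref{lem:choi-commutes} with $\overline U$ in place of $U$, using $q_\lambda^d(\overline U)=\overline{q_\lambda^d(U)}$ in the SSYT basis (your real Gelfand--Tsetlin justification is more careful than the paper's one-line remark);
\item cancel $U_{\mathsf{B}}^T\overline{U_{\mathsf{B}}}=I$ by cyclicity of $\tr_{\mathsf{B}}$, and pull the $\mathsf{A}$-operators outside.
\end{itemize}
Your separate handling of $\lambda_1-\lambda_2<k$ is also fine.

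\paragraph{The wrong step.} The one step that carries the content of the lemma is incorrect. In Lemma~\ref{lem:choi-commutes}, $U^*_{\mathsf{A}}$ is the entrywise conjugate $\overline U^{\otimes k}$, since the dual space carries $\overline U$ (cf.\ \eqref{eq:k-dual-cg}). You appear to have read it as the adjoint. Substituting $\overline U$ for $U$ therefore shows that $J_\lambda^{(k)}$ commutes with $U_{\mathsf{A}}\otimes\overline{U_{\mathsf{B}}}$, not with $(U^T)_{\mathsf{A}}\otimes\overline{U_{\mathsf{B}}}$. Your relation is false in general. For $k=1$ and $\lambda=(1)$ one has $J_\lambda^{(1)}=\sum_{i,j}\ket{ii}\bra{jj}$. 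Then $(U^T\otimes\overline U)\sum_i\ket{ii}$ is the vectorization of $U^TU^\dagger$, which is not proportional to $\sum_i\ket{ii}$ unless $U^T$ is a scalar multiple of $U$. Carried through your chain, your relation would give $(U^T)^{\otimes k}\,\Phi_\lambda^{(k)}(X)\,\overline U^{\otimes k}$. That is neither of the two outcomes ($U^{\otimes k}$ or its conjugate) you leave open.

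\paragraph{Why the proposal does not yet prove the lemma.} Checking one example and then ``trusting the general identity'' is not an argument. The conjugation pattern is exactly what the lemma asserts, so it has to be derived.

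\paragraph{The fix.} The correct commutation relation gives $J_\lambda^{(k)}(I_{\mathsf{A}}\otimes\overline{U_{\mathsf{B}}})=(U_{\mathsf{A}}\otimes\overline{U_{\mathsf{B}}})\,J_\lambda^{(k)}\,(U_{\mathsf{A}}^\dagger\otimes I_{\mathsf{B}})$. Hence
\begin{align*}
\Phi_\lambda^{(k)}(U_{\mathsf{B}}XU_{\mathsf{B}}^\dagger)
&=\tr_{\mathsf{B}}\big(J_\lambda^{(k)}\,(I_{\mathsf{A}}\otimes\overline{U_{\mathsf{B}}}X^TU_{\mathsf{B}}^T)\big)\\
&=U_{\mathsf{A}}\,\tr_{\mathsf{B}}\big((I_{\mathsf{A}}\otimes\overline{U_{\mathsf{B}}})\,J_\lambda^{(k)}\,(I_{\mathsf{A}}\otimes X^TU_{\mathsf{B}}^T)\big)\,U_{\mathsf{A}}^\dagger\\
&=U_{\mathsf{A}}\,\tr_{\mathsf{B}}\big(J_\lambda^{(k)}\,(I_{\mathsf{A}}\otimes X^TU_{\mathsf{B}}^T\overline{U_{\mathsf{B}}})\big)\,U_{\mathsf{A}}^\dagger
=U_{\mathsf{A}}\,\Phi_\lambda^{(k)}(X)\,U_{\mathsf{A}}^\dagger,
\end{align*}
using $U_{\mathsf{B}}^T\overline{U_{\mathsf{B}}}=\overline{U_{\mathsf{B}}^\dagger U_{\mathsf{B}}}=I$. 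This is precisely the paper's computation, and no special-case check is needed.
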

\begin{proof}
    Note that for any unitary $V$, $(V^\dagger)_{\mathsf{B}} = (V_{\mathsf{B}})^\dagger$,
    and because $q_{\lambda}^d$ is a polynomial irrep,
    $(V^*)_{\mathsf{B}} = (V_{\mathsf{B}})^*$,
    where the former complex conjugate is taken with respect to the standard basis and the latter is taken with respect to the SSYT basis of $Q_{\lambda}^d$;
    note that together these imply that \begin{equation*}(V^T)_{\mathsf{B}} = ((V^*)^{\dagger})_{\mathsf{B}}
    = ((V^*)_{\mathsf{B}})^{\dagger}
    = ((V_{\mathsf{B}})^*)^{\dagger}
    = (V_{\mathsf{B}})^T.
    \end{equation*}
    Using these facts, we have that
    \begin{align*}
        \Phi_{\lambda}^{(k)}(U_{\mathsf{B}} \cdot X \cdot U_{\mathsf{B}}^\dagger)
        &= \tr_{\mathsf{B}}\Big(J_\lambda^{(k)} \cdot \big(I_{\mathsf{A}} \otimes (U_{\mathsf{B}}^* \cdot X^T \cdot U_{\mathsf{B}}^T)\big)\Big)\tag{by \eqref{eq:channel-def}}\\
        &= \tr_{\mathsf{B}}\Big(J_\lambda^{(k)} \cdot \big((U \cdot U^\dagger)_{\mathsf{A}} \otimes (U_{\mathsf{B}}^* \cdot X^T \cdot U_{\mathsf{B}}^T)\big)\Big)\\
        &= \tr_{\mathsf{B}}\Big((U_{\mathsf{A}} \otimes U_{\mathsf{B}}^*) \cdot J_\lambda^{(k)} \cdot \big(U^\dagger_{\mathsf{A}} \otimes (X^T \cdot U_{\mathsf{B}}^T)\big)\Big)\tag{by \ref{lem:choi-commutes}}\\
        &= U_{\mathsf{A}} \cdot \tr_{\mathsf{B}}\Big((I_{\mathsf{A}} \otimes U_{\mathsf{B}}^*) \cdot J_\lambda^{(k)} \cdot \big(I^\dagger_{\mathsf{A}} \otimes (X^T \cdot U_{\mathsf{B}}^T)\big)\Big) \cdot U^\dagger_{\mathsf{A}}\\
        &= U_{\mathsf{A}} \cdot \tr_{\mathsf{B}}\Big(J_\lambda^{(k)} \cdot (I^\dagger_{\mathsf{A}} \otimes X^T)\Big) \cdot U^\dagger_{\mathsf{A}} \tag{by the cyclic property of the partial trace}\\
        &= U_{\mathsf{A}}\cdot \Phi_{\lambda}^{(k)}(X) \cdot U^\dagger_{\mathsf{A}}.\tag{by \eqref{eq:channel-def}}
    \end{align*}
    This completes the proof.
\end{proof}

\section{Proof of Theorem~\ref{thm:nonasymptotic-bounds}}

We will show 
\begin{equation} \label{eq:target_complexity}
    n \geq 12k + \frac{2032 + 4k}{\delta} \cdot \frac{(1-p_d)}{(p_d-p_{d-1})^2}
\end{equation}
copies suffice. That is, we may take $C = 2036$.

Let $\bw \sim p^{\otimes n}$, and let $\bh$ be its type. Set $(\blambda, \bS, \bT) = \rsk(\bw)$, and note that $(\blambda, \bS, \bT)$ is distributed as a sample from $\rsk^n(p)$.
Write $\bmu \coloneqq \shape(\bT^{< d})$.
Because $\bT$ is an SSYT, we have that $\bb_i \coloneqq \bmu_i - \blambda_{i+1} \geq 0$ for all $i$. We refer to this quantity as the $i$-th row's \emph{overhang}, as it measures the extent to which the non-$d$'s in the $i$-th row jut out beyond the $(i+1)$-st row beneath them. See Figure~\ref{fig:overhang}.




\begin{figure}[t!]
    \centering

    \newlength{\ytcell}
    \setlength{\ytcell}{2.4em}
    \ytableausetup{boxsize=\ytcell}

    \begin{tikzpicture}[baseline=(T.base)]
        \node[inner sep=0pt, outer sep=0pt] (T) {
            \begin{ytableau}
              1 & 1 & 1 & *(gray!30)1 & *(gray!30)2 & *(gray!70)3\\
              2 & *(gray!30)2 & *(gray!70)3 \\
              *(gray!70)3
            \end{ytableau}
        };

        \coordinate (NW) at (T.north west);
        \draw[
    decorate,
    decoration={brace, mirror, amplitude=3.5pt},
    line width=0.8pt
]
    ($(NW)+(3\ytcell+3pt,-\ytcell-2pt)$) --
    ($(NW)+(5\ytcell+1pt,-\ytcell-2pt)$)
    node[midway, below=5pt] {$b_1$};

        \draw[
            decorate,
            decoration={brace, mirror, amplitude=3.5pt},
            line width=0.8pt
        ]
            ($(NW)+(\ytcell+2pt,-2\ytcell-2pt)$) --
            ($(NW)+(2\ytcell-0.25pt,-2\ytcell-2pt)$)
            node[midway, below=5pt] {$b_2$};





    \end{tikzpicture}

    \caption{Illustration of the overhang $b_i \coloneqq \mu_i-\lambda_{i+1}$,
    shown for a fixed SSYT $T$ of shape \mbox{$\lambda=(6,3,1)$} over the
    alphabet $[3]$. Removing the darker gray boxes, which contain the letter
    $d=3$, leaves the sub-tableau $T^{<d}$ of shape $\mu=(5,2,0)$. The
    overhang $b_i$ counts the lighter gray boxes in row $i$: among the $\lambda_i - \lambda_{i+1}$ boxes in row $i$ extending beyond row $(i+1)$, the overhang counts those that are not labeled $d$. In this
    example diagram, we have $b_1=5-3=2$ and $b_2=2-1=1$. 
    }
    \label{fig:overhang}
\end{figure}

Because our purity amplification algorithm is unitarily covariant due to Lemma~\ref{lem:unitarily-covariant},
it suffices to prove this statement for the special case when $\rho$ is diagonal in the standard basis with increasing spectrum.
In this case, we saw from \eqref{eq:schur-rsk} that in the Schur basis, we can write
\begin{equation*}
    \schur \cdot \rho^{\otimes n} \cdot \schur^{\dagger}
    = \E_{(\blambda, \bS, \bT) \sim \rsk^n(p)} \big[\ketbra{\blambda, \bS, \bT}\big].
\end{equation*}
Since our algorithm for quantum purity amplification discards the symmetric group register before applying $\Phi^{(k)}_{\lambda}$, we have that
\begin{equation*}
    \qpa{k}(\rho^{\otimes n})
    = \E\big[\Phi_{\blambda}^{(k)}\big(\ketbra{\bT}\big)\big].
\end{equation*}
As a result, the output fidelity of the algorithm can be written as
\begin{equation*}
    \bra{d^{\otimes k}} \qpa{k}(\rho^{\otimes n}) \ket{d^{\otimes k}} = \E\big[F(\blambda, \bT)\big],
    \qquad\text{where}\qquad
    F(\blambda, \bT) \coloneqq \bra{d^{*\otimes k}}\Phi_{\blambda}^{(k)}\big(\ketbra{\bT}\big)\ket{d^{*\otimes k}}.
\end{equation*}
We note that on the left we are comparing the fidelity against the vector $\ket{d^{\otimes k}}$ because in the case of diagonal $\rho$, $\ket{v_d} = \ket{d}$. On the right, we are restoring the explicit dual basis for clarity in the lemma below.
Now we restate the output fidelity $F(\blambda, \bT)$ in terms of the combinatorics of Young tableaux.
To do so, we will make use of an expression for this fidelity from Honghao Fu's master's thesis~\cite{Fu16}, which we generalize to the case of $k > 1$.

\begin{lemma}[Combinatorial expression for the fidelity]\label{lem:combinatorial}
    Let $\lambda \vdash n$ and suppose $\lambda_1 - \lambda_2 \geq k$.
    Let $T$ be an SSYT of shape $\lambda$ and alphabet $[d]$.
    Set $\mu = \shape(T^{< d})$ and $b_i = \mu_i - \lambda_{i+1}$.
    For each $i > 1$, set $\Delta_i \coloneqq \lambda_1 - \lambda_i + i - 2$. Then
    \begin{equation}\label{eq:our-fidelity-formula}
        F(\lambda,T) = \prod_{i=2}^{d} \frac{(\Delta_i - b_{i-1})^{\downarrow k}}{(\Delta_i)^{\downarrow k}}.
    \end{equation}
\end{lemma}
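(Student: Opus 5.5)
We expand $F(\lambda,T)$ from the Choi-matrix definition and reduce it to a product of one-box Clebsch--Gordan coefficients, which Lemma~\ref{lem:cg-coeff} evaluates. By \eqref{eq:channel-def} with $X = \ketbra{T}$ (whose transpose in the SSYT basis is itself),
\begin{equation*}
    F(\lambda,T) = \bigl(\bra{d^{*\otimes k}}\otimes\bra{T}\bigr)\, J_\lambda^{(k)}\, \bigl(\ket{d^{*\otimes k}}\otimes\ket{T}\bigr)
    = \frac{\dim(Q^d_\lambda)}{\dim(Q^d_{\lambda-ke_1})}\,\bigl\|E_{\lambda-ke_1}\,\ucg^{\lambda,k}\,\ket{d^{*\otimes k}}\ket{T}\bigr\|^2 .
\end{equation*}

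The key step uses the factorization \eqref{eq:induction}, $\ucg^{\lambda,k} = \ucg^{\lambda,k-1\to k}\cdots\ucg^{\lambda,0\to 1}$, with the dual factors peeled off one at a time. The copy of $Q_{\lambda-ke_1}$ is reached only through the path $r=(1,\dots,1)$, so $E_{\lambda-ke_1}$ factors through the projectors onto the $\lambda-te_1$ components at each stage. Hence the amplitude is the product over $t=0,\dots,k-1$ of single-step amplitudes $\bra{\lambda-(t+1)e_1,T_{t+1}}\ucg^{\lambda-te_1}\ket{d^*}\ket{T_t}$, with $T_0=T$.

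By Lemma~\ref{lem:cg-coeff}, each step is nonzero only if the last box of row~1 of $T_t$ contains $d$, and it forces $T_{t+1}$ to be $T_t$ with that box removed. This makes the intermediate state a single basis vector at every stage and removes all cross terms. The condition holds for all $k$ steps exactly when row~1 has at least $k$ boxes labeled $d$, i.e.\ $\lambda_1-\mu_1\ge k$. Otherwise $F=0$, matching \eqref{eq:our-fidelity-formula}: there $\Delta_2-b_1 = \lambda_1-\mu_1 < k$, so the falling factorial $(\Delta_2-b_1)^{\downarrow k}$ vanishes. Along the path $\mu=\shape(T_t^{<d})$ is unchanged while $\lambda_1$ drops to $\lambda_1-t$. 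The squared amplitude is therefore
\begin{equation*}
    \prod_{t=0}^{k-1}\frac{\prod_{i=1}^{d-1}(\lambda_1-t-\mu_i+i-1)}{\prod_{i=2}^{d}(\lambda_1-t-\lambda_i+i-1)} .
\end{equation*}
Reindexing $i\to i-1$ in the numerator gives the factor $\lambda_1-t-\mu_{i-1}+i-2 = \Delta_i-b_{i-1}-t$, since $\mu_{i-1}=\lambda_i+b_{i-1}$. The denominator factor is $\Delta_i+1-t$. So the product equals $\prod_{i=2}^d (\Delta_i-b_{i-1})^{\downarrow k}/(\Delta_i+1)^{\downarrow k}$.

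It remains to evaluate the dimension ratio via the Weyl dimension formula. Only the pairs $(1,j)$ change between $\lambda$ and $\lambda-ke_1$, so
\begin{equation*}
    \frac{\dim(Q^d_\lambda)}{\dim(Q^d_{\lambda-ke_1})} = \prod_{j=2}^d \frac{\lambda_1-\lambda_j+j-1}{\lambda_1-k-\lambda_j+j-1} = \prod_{j=2}^d\frac{\Delta_j+1}{\Delta_j+1-k}.
\end{equation*}
Multiplying through, $(\Delta_i+1)^{\downarrow k}\cdot(\Delta_i+1-k)/(\Delta_i+1) = (\Delta_i)^{\downarrow k}$, which gives \eqref{eq:our-fidelity-formula}.

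The main obstacle is justifying the path factorization rigorously. We need $E_{\lambda-ke_1}$ to be compatible with the iterated decomposition, i.e.\ that the unique copy is the image of $Q_{\lambda-(k-1)e_1}$ under $\ucg^{\lambda-(k-1)e_1}$, recursively. We also need that the intermediate tableau bases are the SSYT bases to which Lemma~\ref{lem:cg-coeff} applies, with phases irrelevant because only a single term survives. We would argue this by induction on $k$ using \eqref{eq:one-step-cg}.
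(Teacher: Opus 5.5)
Your proposal is correct and follows the paper's proof essentially step for step. The shared steps are: the Choi-matrix reduction to $\frac{\dim(Q^d_\lambda)}{\dim(Q^d_{\lambda-ke_1})}\|E_{\lambda-ke_1}\ucg^{\lambda,k}\ket{d^{*\otimes k}}\ket{T}\|^2$; the factorization \eqref{eq:induction} along the unique path $(1,\dots,1)$, where Lemma~\ref{lem:cg-coeff} leaves a single surviving term and the case $\lambda_1-\mu_1<k$ is handled by the vanishing falling factorial; and the Weyl dimension ratio, absorbed via $(\Delta_i+1)^{\downarrow k}(\Delta_i+1-k)/(\Delta_i+1)=(\Delta_i)^{\downarrow k}$. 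The paper justifies the path factorization at the same level you sketch, noting that each $\bra{1^{t+1},T^{t+1}}\ucg^{\lambda,t\to t+1}$ is supported on vectors $\bra{1^t,T^t}$ and inserting the corresponding resolutions of the identity.
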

\begin{proof}
We begin by expressing the fidelity $F(\lambda, T)$ as 
\begin{align}
\bra{d^{*\otimes k}}\Phi_{\lambda}^{(k)}\big(\ketbra{T}\big)\ket{d^{*\otimes k}}
&= \bra{d^{*\otimes k}}\tr_{Q_{\lambda}^d}\Big(J_\lambda^{(k)} \cdot \big(I \otimes (\ketbra{T})^T\big)\Big)\ket{d^{*\otimes k}}
    \tag{by \eqref{eq:channel-def}}\\
&= \bra{d^{*\otimes k}}\tr_{Q_{\lambda}^d}\Big(J_\lambda^{(k)} \cdot \big(I \otimes \ketbra{T}\big)\Big)\ket{d^{*\otimes k}}\nonumber\\
&= \bra{d^{*\otimes k}} \bra{T}J_{\lambda}^{(k)}\ket{d^{*\otimes k}} \ket{T}\nonumber\\
&= \frac{\dim(Q^d_\lambda)}{\dim(Q^d_{\lambda-ke_1})}\cdot \bra{d^{*\otimes k}} \bra{T}\cdot (\ucg^{\lambda,k})^\dagger \cdot E_{\lambda-ke_1} \cdot \ucg^{\lambda,k}\cdot \ket{d^{*\otimes k}} \ket{T}\tag{by \eqref{eq:j-def}}\\
&= \frac{\dim(Q^d_\lambda)}{\dim(Q^d_{\lambda-ke_1})}\cdot \Vert E_{\lambda-ke_1} \cdot \ucg^{\lambda,k}\cdot \ket{d^{*\otimes k}} \ket{T}\Vert^2.\label{eq:goal-value}
\end{align}

To understand the length of this vector, by \eqref{eq:induction} we have
\begin{align}
    E_{\lambda-ke_1} \cdot \ucg^{\lambda,k}\cdot \ket{d^{*\otimes k}} \ket{T}
    &= E_{\lambda-ke_1} \cdot \ucg^{\lambda,k-1 \rightarrow k}\cdots \ucg^{\lambda,1 \rightarrow 2} \ucg^{\lambda}\cdot \ket{d^{*\otimes k}} \ket{T}\nonumber\\
    &= \sum_{T^k} \ketbra{1^k, T^k} \cdot \ucg^{\lambda,k-1 \rightarrow k}\cdots \ucg^{\lambda,1 \rightarrow 2} \ucg^{\lambda}\cdot \ket{d^{*\otimes k}} \ket{T},\label{eq:split-the-ucg}
\end{align} 
where the sum is over all SSYTs $T^k$ of shape $\lambda(1^k) = \lambda - k e_1$.
Recall that applying the dual Clebsch--Gordan transform removes an additional box from the Young diagram specified by the current path.
Reversing this, $\bra{1^k, T^k} \cdot \ucg^{\lambda,k-1 \rightarrow k}$ must be supported on vectors of the form $\bra{1^{k-1}, T^{k-1}}$, and so
\begin{equation*}
    \bra{1^k, T^k} \cdot \ucg^{\lambda,k-1 \rightarrow k}
    = \sum_{T^{k-1}} \bra{1^k, T^k} \cdot \ucg^{\lambda,k-1 \rightarrow k} \cdot \ketbra{1^{k-1},T^{k-1}}.
\end{equation*}
Applying this logic repeatedly, we have
\begin{equation}\label{eq:big-prod}
    \eqref{eq:split-the-ucg}
    = \sum_{T^1, \ldots, T^k} \ket{1^k, T^k} \cdot \prod_{t=0}^{k-1} \bra{1^{t+1},T^{t+1}} \ucg^{\lambda, t \rightarrow t+1} \ket{d^*}\ket{1^t, T^t},
\end{equation}
where the sum is over all SSYTs $T^t$ of shape $\lambda(1^t) = \lambda - t e_1$,
and we have set $T^0 \coloneqq T$.
Now, by \eqref{eq:one-step-cg}, we have that
\begin{equation*}
    \bra{1^{t+1}, T^{t+1}} \ucg^{\lambda, t\rightarrow t+1} \ket{d^*}\ket{1^t, T^t} = \bra{1^{t+1}, T^{t+1}} \ucg^{\lambda(1^t)} \ket{d^*}\ket{T^t}.
\end{equation*}
As a result, we can rewrite the above expression as
\begin{equation*}
\eqref{eq:big-prod}=
    \sum_{T^1, \ldots, T^k} \ket{1^k, T^k} \cdot \prod_{t=0}^{k-1} \bra{1^{t+1}, T^{t+1}} \ucg^{\lambda(1^t)} \ket{d^*}\ket{T^t}.
\end{equation*}
Finally, by Lemma~\ref{lem:cg-coeff}, we know that $\bra{1^{t+1}, T^{t+1}} \ucg^{\lambda(1^t)} \ket{d^*}\ket{T^t} = 0$  unless the final box in the first row of $T^t$ contains the letter ``$d$'', and $T^{t+1}$ is the SSYT which results from removing this box.
This implies that \eqref{eq:big-prod} is actually 0 if $T$ does not have at least $k$ boxes containing the letter ``$d$'' at the end of its first row;
however, note that in this case, our desired bound of~\eqref{eq:our-fidelity-formula} is also equal to 0, because 
\begin{equation*}
    \Delta_2 - b_1
    = (\lambda_1 - \lambda_2) - (\mu_1 - \lambda_2)
    = \lambda_1 - \mu_1
    = (\text{the number of $d$'s in $T$'s first row}) < k,
\end{equation*}
and so $(\Delta_2 - b_1)^{\downarrow k} = 0$.
Thus,
we will henceforth assume
that $T$ has at least $k$ boxes containing the letter ``$d$'' at the end of its first row;
in this case, we now have that
\begin{equation*}
    E_{\lambda-ke_1} \cdot \ucg^{\lambda,k}\cdot \ket{d^{*\otimes k}} \ket{T}
    = \ket{1^k, T^k} \cdot \prod_{t=0}^{k-1} \bra{1^{t+1}, T^{t+1}} \ucg^{\lambda(1^t)} \ket{d^*}\ket{T^t},
\end{equation*}
where we have fixed $T^t$ to be the SSYT resulting from removing the last $t$ boxes in the first row of $T$.
Thus,
\begin{equation}\label{eq:squared-length}
    \Vert E_{\lambda-ke_1} \cdot \ucg^{\lambda,k}\cdot \ket{d^{*\otimes k}} \ket{T}\Vert^2
    = \prod_{t=0}^{k-1} |\bra{1^{t+1}, T^{t+1}} \ucg^{\lambda(1^t)} \ket{d^*}\ket{T^t}|^2.
\end{equation}
To apply Lemma~\ref{lem:cg-coeff}, let $\mu = \shape(T^{< d})$, and note that because only $d$'s are removed when going from $T$ to any of the $T^t$'s, $\mu$ is also equal to $\shape((T^t)^{<d})$ for any $t$. Thus,
\begin{equation*}
    \eqref{eq:squared-length}
    = \prod_{t=0}^{k-1} \frac{\prod_{i=1}^{d-1}(\lambda_1-t - \mu_i + i - 1)}{\prod_{i=2}^{d}(
        \lambda_1-t - \lambda_i + i - 1)}
    = \frac{\prod_{i=1}^{d-1}(\lambda_1 - \mu_i + i - 1)^{\downarrow k}}{\prod_{i=2}^{d}(
        \lambda_1 - \lambda_i + i - 1)^{\downarrow k}}
         =\prod_{i=2}^{d} \frac{(\Delta_i - b_{i-1})^{\downarrow k}}{(
        \Delta_i + 1)^{\downarrow k}},
\end{equation*}
where in the last line we used the fact that $b_i = \mu_i - \lambda_{i+1}$, and so 
\begin{equation*}
\lambda_1 - \mu_i + i - 1 = \lambda_1 - \lambda_{i+1} - b_i + i -1
= \Delta_{i+1} - b_i.
\end{equation*}
This concludes our computation of the length of the vector.

Revisiting our goal of computing \eqref{eq:goal-value}, it remains to compute the ratio of the dimensions.
To do so, let us recall the Weyl dimension formula, which states that
\begin{equation*}
    \dim (Q^d_\nu)
    =\prod_{1\leq i<j\leq d}\frac{\nu_i-\nu_j+j-i}{j-i}.
\end{equation*}
When applying this to $\nu=\lambda$ and $\nu=\lambda-ke_1$, all factors with $i\neq 1$
match. Hence,
\begin{equation*}
    \frac{\dim (Q^d_\lambda)}{\dim(Q^d_{\lambda-ke_1})}
    =
    \prod_{i=2}^d
    \frac{\lambda_1-\lambda_i+i-1}{\lambda_1-k-\lambda_i+i-1}
    =
    \prod_{i=2}^d
    \frac{\Delta_i+1}{\Delta_i-k+1}.
\end{equation*}
In summary, we have
\begin{equation*}
    \bra{d^{\otimes k}}\Phi_{\lambda}^{(k)}\big(\ketbra{T}\big)\ket{d^{\otimes k}}
    = \prod_{i=2}^d
    \frac{\Delta_i+1}{\Delta_i-k+1} \cdot \frac{(\Delta_i - b_{i-1})^{\downarrow k}}{(
        \Delta_i + 1)^{\downarrow k}}
    = \prod_{i=2}^d
    \frac{(\Delta_i - b_{i-1})^{\downarrow k}}{(
        \Delta_i)^{\downarrow k}}.
\end{equation*}
This completes the proof.
\end{proof}

Let us first manipulate the expression that Lemma~\ref{lem:combinatorial} produces in order to understand it better. Note that
\begin{align*}
   \frac{(\bDelta_i - \bb_{i-1})^{\downarrow k}}{(\bDelta_i)^{\downarrow k}}
   &= \frac{(\bDelta_i - \bb_{i-1}) \cdots (\bDelta_i - \bb_{i-1}-k+1)}{ \bDelta_i \cdots (\bDelta_i - k+1)}\\
   &= \prod_{j=0}^{k-1} \frac{\bDelta_i - \bb_{i-1} - j}{\bDelta_i - j}
   = \prod_{j=0}^{k-1} \Big(1 - \frac{\bb_{i-1}}{\bDelta_i - j}\Big)
   \geq \Big(1 - \frac{\bb_{i-1}}{\bDelta_i - k+1}\Big)^k
   \geq 1 - k \cdot \Big(\frac{\bb_{i-1}}{\bDelta_i -k+1}\Big).
\end{align*}
Hence,
\begin{equation*}
    \prod_{i=2}^{d} \frac{(\bDelta_i - \bb_{i-1})^{\downarrow k}}{(\bDelta_i)^{\downarrow k}}
    \geq \prod_{i=2}^d \Big(1 - k \cdot \Big(\frac{\bb_{i-1}}{\bDelta_i-k+1}\Big)\Big)
    \geq 1 - \sum_{i=2}^d k \cdot \Big(\frac{\bb_{i-1}}{\bDelta_i-k+1}\Big).
\end{equation*}
Note that for $i \geq 2$, $\bDelta_i = \blambda_1 - \blambda_i + i - 2 \geq \blambda_1 - \blambda_2$. As a result, so long as $\blambda_1 - \blambda_2\geq k$, we have that
\begin{equation}\label{eq:fidelity-lb}
    F(\blambda, \bT) \geq 1 - \Big(\frac{k}{\blambda_1 - \blambda_2 - k + 1}\Big) \cdot \sum_{i=2}^d \bb_{i-1}.
\end{equation}
We want to show that the fidelity is close to 1, which entails showing that the second term is small. 
This term is the ratio of two expressions which are correlated random variables depending on the sample $(\blambda, \bS, \bT)$ from the RSK distribution.
To decouple these random variables, we will choose an event $\bcalE$ conditioned on which the denominator essentially becomes a fixed value. Doing so will allow us to analyze the numerator by itself, which is an easier task.

Typically, we expect that $\blambda_1$ should be roughly equal to $p_d \cdot n$ and $\blambda_2$ should be roughly equal to $p_{d-1} \cdot n$.
So we expect that $\blambda_1 - \blambda_2$ should be at least, say, $\frac{1}{2}(p_d - p_{d-1}) \cdot n$ with reasonably large probability.
Thus, let us define $\bcalE$ to be the event that $\blambda_1 - \blambda_2 \geq \frac{1}{2}(p_d - p_{d-1}) \cdot n$.
By our choice of $n$, we have 
\begin{equation}
    \frac{1}{2}(p_d - p_{d-1}) \cdot n \geq 2k. \label{eq:at_least_2k}
\end{equation}
To see this, we consider two cases. First, if $p_d - p_{d-1} \geq 1/3$, then the bound holds since $n \geq 12k$ as well. Second, if $p_d - p_{d-1} \leq 1/3$, then it must be that $p_d \leq 2/3$, in which case 
\begin{equation*}
    \frac{1}{2}(p_d - p_{d-1}) \cdot n \geq \frac{1}{2} (p_d-p_{d-1}) \cdot \Big( \frac{4k}{\delta} \cdot \frac{(1-p_d)}{(p_d-p_{d-1})^2}\Big) \geq 2k \cdot \Big(\frac{1-p_d}{p_d-p_{d-1}}\Big) \geq 2k
\end{equation*}
where we have used $\delta \leq 1$, and $1-p_d \geq 1/3 \geq p_d - p_{d-1}$. Thus, for our choice of $n$, \eqref{eq:at_least_2k} holds. Hence, when the event $\calE$ occurs, $\blambda_1 - \blambda_2 \geq k$, which we recall was sufficient for the lower bound in Equation~\eqref{eq:fidelity-lb} to hold.
In addition, we have that $\blambda_1 - \blambda_2 -k + 1 \geq \frac{1}{4}(p_d - p_{d-1}) \cdot n$.
Plugging this into Equation~\eqref{eq:fidelity-lb}, when $\bcalE$ occurs, we have that
\begin{equation}\label{eq:final-lb}
    F(\blambda, \bT) \geq 1 - \Big(\frac{4k}{(p_d-p_{d-1})\cdot n}\Big) \cdot \sum_{i=1}^{d-1} \bb_{i}.
\end{equation}
Having done this,
we see that the denominator is now fixed,
and this will make analyzing the numerator much simpler.

Now, averaging over all $(\blambda, \bS, \bT)$,
we can lower bound the output fidelity by
\begin{align*}
    \bra{v_d^{\otimes k}} \qpa{k}(\rho^{\otimes n}) \ket{v_d^{\otimes k}}
    &\geq \E[\bone[\blambda_1 - \blambda_2 \geq k] \cdot F(\blambda, \bT)]\\
    &\geq \E[\bone[\bcalE] \cdot F(\blambda, \bT)] \tag{because $\bcalE \Rightarrow \blambda_1 -\blambda_2 \geq k$}\\
    &\geq \E\Big[\bone[\bcalE]\cdot\Big(1 - \Big(\frac{4k}{(p_d-p_{d-1})\cdot n}\Big) \cdot \sum_{i=2}^d \bb_{i-1}\Big)\Big] \tag{by \eqref{eq:final-lb}}\\
    &\geq \Pr[\bcalE] - \Big(\frac{4k}{(p_d-p_{d-1})\cdot n}\Big) \cdot \E\Big[\sum_{i=1}^{d-1} \bb_{i}\Big].
\end{align*}
As argued before, the probability $\Pr[\bcalE]$ should be reasonably large, and we will argue below that it is indeed close to~1. Before doing so, however, let us focus on the sum of the $\bb_i$'s. Expanding it, 
\begin{equation*}
    \sum_{i=1}^{d-1} \bb_i
    = \sum_{i=1}^{d-1} \bmu_i - \sum_{i=1}^{d-1} \blambda_{i+1}
    = |\bmu| - (|\blambda| - \blambda_1)
    = |\bmu| - n + \blambda_1.
\end{equation*}
Now, $|\bmu|$ is equal to the number of non-$d$ letters in $\bw$, and is therefore $(p_1 + \cdots + p_{d-1}) \cdot n$ in expectation.
Combining this with Lemma~\ref{lem:first-row}, we have that
\begin{equation*}
    \E\Big[\sum_{i=1}^{d-1} \bb_{i}\Big]
    = (p_1 + \cdots + p_{d-1})\cdot n - n + \E[\blambda_1]
    = \E[\blambda_1] - p_d \cdot n
    \leq \sum_{i=1}^{d-1} \frac{p_i}{p_d - p_i}
    \leq \sum_{i=1}^{d-1} \frac{p_i}{p_d - p_{d-1}}
    = \frac{(1-p_d)}{p_d-p_{d-1}}.
\end{equation*}
Hence,
\begin{equation}\label{eq:almost-there}
\bra{v_d^{\otimes k}} \qpa{k}(\rho^{\otimes n}) \ket{v_d^{\otimes k}}
\geq \Pr[\bcalE] -4 \cdot \frac{k}{n} \cdot \frac{1-p_d}{(p_d-p_{d-1})^2}.
\end{equation}
It remains to bound $\Pr[\bcalE]$,
which we do in the following lemma.
Plugging this bound into \eqref{eq:almost-there} gives
\begin{equation*}
    \bra{v_d^{\otimes k}} \qpa{k}(\rho^{\otimes n}) \ket{v_d^{\otimes k}} \geq 1 - \frac{2032+4k}{n} \cdot \frac{(1-p_d)}{(p_d-p_{d-1})^2} \geq 1-\delta,
\end{equation*}
by our choice of $n$ in \eqref{eq:target_complexity}. This completes the proof.

\begin{lemma}[Concentration of the denominator]
\begin{equation*}
    \Pr\big[\,\overline{\bcalE}\,\big] \;\leq\; \frac{2032}{n} \cdot \frac{(1-p_d)}{(p_d-p_{d-1})^2}.
\end{equation*}
\end{lemma}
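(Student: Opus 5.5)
The complement event $\overline{\bcalE}$ says $\blambda_1 - \blambda_2 < \frac12 (p_d - p_{d-1})\, n$. Write $g \coloneqq p_d - p_{d-1}$. Because $\blambda_1 \geq \bh_d$ deterministically, $\overline{\bcalE}$ forces
\begin{equation*}
(\bh_d - p_d n) - (\blambda_2 - p_{d-1} n) < -\tfrac12 g n .
\end{equation*}
Hence either $\bh_d - p_d n \leq -\frac14 g n$ or $\blambda_2 - p_{d-1} n \geq \frac14 g n$, and I will bound the two probabilities separately with Chebyshev's inequality.

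\emph{First deviation.} Here $\bh_d \sim \mathrm{Bin}(n, p_d)$, so its variance is $n p_d (1 - p_d) \leq (1-p_d)\, n$. Chebyshev gives
\begin{equation*}
\Pr\big[\bh_d - p_d n \leq -\tfrac14 g n\big] \leq \frac{16 (1-p_d)}{g^2 n}.
\end{equation*}

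\emph{Second deviation.} Apply Chebyshev with the second-moment bound of Lemma~\ref{lem:second-row}:
\begin{equation*}
\Pr\big[\blambda_2 - p_{d-1} n \geq \tfrac14 g n\big] \leq \frac{16\,\E[(\blambda_2 - p_{d-1}n)^2]}{g^2 n^2} \leq \frac{16\,(84 p_{d-1} + 42 (1-p_d))}{g^2 n}.
\end{equation*}
Since $p_{d-1} \leq 1 - p_d$ (it is one of the non-$d$ probabilities), this is at most $16 \cdot 126\,(1-p_d)/(g^2 n) = 2016\,(1-p_d)/(g^2 n)$.

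Adding the two bounds gives $(16 + 2016)(1-p_d)/(g^2 n) = 2032\,(1-p_d)/(g^2 n)$, which is exactly the claimed constant. This suggests the split at $\frac14 g n$ on each side is the intended one.

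The main step is recognizing that one should lower bound $\blambda_1$ by $\bh_d$ rather than try to control $\blambda_1$'s lower tail directly. After that, the only input beyond elementary probability is Lemma~\ref{lem:second-row}, together with the inequality $p_{d-1} \leq 1-p_d$ to merge its two terms. I also need to be careful that the events are split with non-strict inequalities in the correct directions, so the union bound covers $\overline{\bcalE}$.
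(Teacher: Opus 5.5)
Your proof is correct and follows the paper's argument essentially step for step. You use the same split at $\frac14(p_d-p_{d-1})n$, the bound $\blambda_1 \geq \bh_d$ with Chebyshev on the binomial for the first event, and Lemma~\ref{lem:second-row} together with $p_{d-1} \leq 1-p_d$ for the second, giving $16 + 2016 = 2032$; the only cosmetic difference is that you substitute $\bh_d$ for $\blambda_1$ before splitting, whereas the paper splits on $\blambda_1$ first and then passes to $\bh_d$.
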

\begin{proof}
Let us recall that $\overline{\bcalE}$ is the event that $\blambda_1 - \blambda_2 < \frac{1}{2}(p_d - p_{d-1}) \cdot n$.
When this occurs, it must be the case that either
\begin{equation*}
    (i)~\blambda_1 \leq p_d \cdot n - \frac{1}{4}(p_d - p_{d-1}) \cdot n,
    \qquad\text{or}\qquad
    (ii)~\blambda_2 \geq p_{d-1} \cdot n + \frac{1}{4}(p_d - p_{d-1}) \cdot n.
\end{equation*}
We will bound the probability of these events separately, and then our bound on $\overline{\bcalE}$ will follow from a union bound.

For event (i), let us note that $\bh_d$, the number of $d$'s in $\bw$, is at most $\blambda_1$. Hence, when event (i) occurs, it must be the case that
\begin{equation*}
    \bh_d \leq p_d \cdot n - \frac{1}{4}(p_d - p_{d-1}) \cdot n.
\end{equation*}
But $\bh_d$ is distributed as $\mathrm{Binomial}(n, p_d)$ and hence has variance $p_d (1- p_d)\cdot n$. Thus, by Chebyshev's inequality,
\begin{equation}\label{eq:bound-on-event-i}
    \Pr\Big[\bh_d \leq p_d \cdot n - \frac{1}{4}(p_d - p_{d-1}) \cdot n\Big]
    \leq \frac{p_d(1-p_d)\cdot n}{(\frac{1}{4}(p_d - p_{d-1})\cdot n)^2}
    \leq \frac{16}{n} \cdot \frac{(1-p_d)}{(p_d-p_{d-1})^2}.
\end{equation}
Hence, this bounds the probability that event (i) occurs as well.

For event (ii), let us write $\bX \coloneqq \blambda_2 - p_{d-1} \cdot n$. Then Lemma~\ref{lem:second-row} implies that
\begin{equation*}
    \E \big[\bX^2\big] \leq 84 p_{d-1} \cdot n +42 (1 - p_d) \cdot n \leq 126 (1-p_d) \cdot n.
\end{equation*}
Thus, Markov's inequality implies that
\begin{multline*}
    \Pr\Big[\blambda_2 \geq p_{d-1} \cdot n + \frac{1}{4}(p_d - p_{d-1}) \cdot n\Big]
    = \Pr\Big[\bX \geq \frac{1}{4}(p_d - p_{d-1}) \cdot n\Big]\\
    \leq \Pr\bigg[\bX^2 \geq \Big(\frac{1}{4}(p_d - p_{d-1}) \cdot n\Big)^2\bigg]
    \leq \frac{126(1-p_d) \cdot n}{(\frac{1}{4}(p_d - p_{d-1}) \cdot n)^2}
    = \frac{2016}{n} \cdot \frac{(1-p_d)}{(p_d-p_{d-1})^2}.
\end{multline*}
Combining this with \eqref{eq:bound-on-event-i} completes the proof.
\end{proof}

\bibliographystyle{alpha}
\bibliography{wright}

\end{document}